\title{On generating direct powers of dynamical Lie algebras}
\author[1]{Jonathan Allcock  \thanks{jonallcock@tencent.com}}
\author[2,3]{Miklos Santha \thanks{cqtms@nus.edu.sg}}
\author[1]{Pei Yuan  \thanks{peiyuan@tencent.com}}
\author[1]{Shengyu Zhang \thanks{shengyzhang@tencent.com}}
\affil[1]{Tencent Quantum Laboratory}
\affil[2]{CQT, National University of Singapore} 
\affil[3]{CNRS, IRIF, Université de Paris}
\date{}
\begin{document}

\maketitle

\abstract{The expressibility and trainability of parameterized quantum circuits has been shown to be intimately related to their associated dynamical Lie algebras (DLAs).  From a quantum algorithm design perspective, given a set $\+A$ of DLA generators, two natural questions arise: (i) what is the DLA $\.g_\+A$ generated by $\+A$; and (ii) how does modifying the generator set lead to changes in the resulting DLA. While the first question has been the subject of significant attention, much less has been done regarding the second. In this work we focus on the second question, and show how modifying $\+A$ can result in a generator set $\+A'$ such that $\.g_{\+A'}\cong \bigoplus_{j=1}^{K}\.g_\+A$, for some $K \ge 1$.  In other words, one generates the direct sum of $K$ copies of the original DLA. 
 In particular, we give qubit- and parameter-efficient ways of achieving this, using only $\log K$ additional qubits, and only a constant factor increase in the number of DLA generators.  For cyclic DLAs, which include Pauli DLAs and QAOA-MaxCut DLAs {as special cases}, this can be done with $\log K $ additional qubits and the same number of DLA generators as $\+A$.

}



\section{Introduction}

\subsection{Background}

Variational quantum algorithms are based on parameterized quantum circuits (PQCs) consisting of tunable gates from some finite set. By adjusting the parameters to optimize a given loss function, a wide variety of problems can, in principle, be solved via this framework.  However, in practice, the quality of solution depends on the expressibility of the PQC, i.e., the set of states that the PQC can output (for a given input state), while the efficiency with which a good solution can be found depends on the ease with which the parameters can be optimized.  In certain cases, trainability can be critically hindered by the presence of so-called barren plateaus: regions of parameter space where the loss function becomes exponentially flat as the problem size increases. 

Recently, an important connection was made between the properties of variational quantum algorithms and their associated dynamical Lie algebras (DLAs)~\cite{ragone2024lie, fontana2024characterizing}.  At a high level, DLAs with large dimensions (e.g. those growing exponentially with the number of qubits) favor expressibility, but lead to barren plateaus. Conversely, barren plateaus may be avoided if the DLA dimension grows only polynomially with qubit number, but at the cost of restricting the space of states expressible by the PQC.   The dimension of DLAs has also been shown to upper-bound the number of parameters required for overparameterization to occur~\cite{larocca2023theory}, where the PQC can explore all relevant directions in state space and which  facilitates fast convergence to global optima.

The analysis of DLAs is therefore of central importance to understanding the feasibility of variational quantum algorithms and to aiding in their design.  In this regard, given a set $\+A$ of DLA generators (corresponding to the set of tunable gates in a PQC), two natural questions arise:  
\begin{enumerate}[(i)]
    \item What is the DLA $\.g_\+A$ generated by $\+A$?
    \item How does modifying $\+A$ lead to changes in the resulting DLA?
\end{enumerate}
While answers to the first question provide useful information about the expressibility and trainability of an existing PQC, answers to the second can help in constructing new PQCs with desirable properties from old ones. 

The first question has been the subject of significant attention. Unfortunately, in general, it can be difficult to analytically determine $\.g_\+A$ from $\+A$, while numerically computing properties of $\.g_\+A$ may be inefficient as $\dim(\.g_\+A)$ 
can be exponential in the number of qubits $n$.  That said, there are cases where DLAs can be analytically deduced. For instance, it is known that if $\+A$ consists of two randomly chosen generators then, with probability one, $\.g_\+A=\.{su}(2^n)$~\cite{jurdjevic1997geometric}.  Another situation amenable to analysis is that of Pauli Lie algebras, where each generator is proportional to an $n$-qubit Pauli operator. Aguilar et al.~\cite{aguilar2024full} have shown how any Pauli DLA falls into one of four families, and gave an efficient method for determining a given DLA from the anticommutation properties of its generators.  Multi-angle QAOA-MaxCut DLAs have also been studied. A special case of Pauli DLAs, these correspond to a modified quantum approximate optimization algorithm (QAOA)~\cite{farhi2014quantum} for solving the graph MaxCut problem, and were recently fully classified by Kokcu et al.~\cite{kokcu2024classification} and Kazi et al.~\cite{kazi2024analyzing}.  Standard QAOA-MaxCut DLAs, corresponding to the original QAOA algorithm and where each generator is a sum of $n$-qubit Pauli operators, were studied in~\cite{allcock2024dynamical,d2025controllability} where, among other things, the DLA for the $n$-vertex cycle graph was identified and found to be isomorphic to $\bigoplus_{j=1}^{n-1}\.{su}(2)$. The QAOA-MaxCut DLA for the complete graph was also identified, and an explicit basis given \cite{allcock2024dynamical}.

Less work has been done regarding the second question. Of note is the work by Zimborás et al.~\cite{zimboras2015symmetry} who investigated this from the lens of simulability. That is, given sets $\+A$ and $\+B$, they gave necessary and sufficient conditions for $\.g_\+A=\.g_{\+A \cup \+B}$, and thus for the system with interactions in $\+A$ to simulate interactions from $\+B$.



The second question above is the focus of this work, where we investigate modifications to DLA generators which lead to the resulting DLA being a direct power of the original DLA. That is, if the generating set $\+A$ for DLA $\.g_\+A$ is modified to become $\+A'$, then $\.g_{\+A'}=\bigoplus_{j=1}^K \.g_\+A$, for some integer $K \ge 2$. Our motivations for doing so are threefold:
\begin{enumerate}
    \item \textbf{A tool to aid DLA analysis.} Direct power DLA structures naturally appear in various situations, e.g.\ in the DLA corresponding to cycle graphs in the quantum approximate optimization algorithm for MaxCut~\cite{allcock2024dynamical, d2025controllability}, or when the DLA generators are strings of Pauli operators~\cite{aguilar2024full}. The analysis of these DLAs can be involved, and a question is whether, in some cases, their structure can be understood as a modification of a simpler DLA. 
    \item \textbf{Constructing DLAs that avoid barren plateaus.}  From a quantum control theory perspective, one often seeks \textit{full controllability} of a system which, for $n$ qubits, corresponds to a DLA equal to $\.{su}(2^n)$. However, for variational quantum algorithms, full controllability can be problematic. Consider a variational algorithm with initial state $\rho$, measurement operator $O$, and vector $\theta$ of tunable parameters. The corresponding DLA $\.g_\+A$ decomposes into direct sum of simple subalgebras $\.g_1,...,\.g_k$ (for $k\ge 1$) and a center $\.c$, and it is known~\cite{ragone2023unified} that if $\rho\in i\.g_\+A$ or $O\in i\.g_\+A$,  then the variance of the loss function $\ell(\rho,O,\theta)$ satisfies 
    \begin{equation*}
    \var_\theta[\ell(\rho,O;\theta)] = \sum_{j=1}^k \frac{\+P_{\.g_j}(\rho) \+P_{\.g_j}(O)}{\dim(\.g_j)},
\end{equation*}
provided the circuit is deep enough to form a $2$-design. Here $\+P_{\.s}(H)\defeq \sum_{j=1}^{\dim(\.s)}|\tr(E_j^\dag H)|^2$, and $\{E_j\}_{j=1}^{\dim(\.s)}$ is an orthonormal basis for a Lie algebra $\.s$. Avoiding barren plateaus requires, by definition, that $\var_\theta[\ell(\rho,O;\theta)] =\Omega(1/\poly(n))$. However, for $\.{su}(2^n)$, which is a simple Lie algebra (and hence has no non-trivial simple subalgebras) of dimension $4^n-1$, the variance vanishes exponentially quickly in $n$.  The ability to construct DLAs which decompose into multiple copies of smaller subalgebras can therefore be used as a tool in designing algorithms which avoid barren plateaus.
    
    \item \textbf{Finding qubit- and parameter-efficient ways of constructing DLAs.} Given a set of generators $\+A$ on $n$ qubits, it is trivial to define a DLA $\+A'$ on $nK$ qubits such that $\.g_{\+A'}=\bigoplus_{j=1}^K \.g_\+A$: simply let $\+A'$ consist of $K$ copies of $\+A$ acting on disjoint sets of qubits.  However
    , in addition to the large number of qubits required, the cardinality of the modified set is $\abs{\+A'}=K\abs{\+A}$. As the number of tunable parameters in a PQC is lower-bounded by $\abs{\+A}$, the variational optimization of such a system requires tuning at least $K\abs{\+A}$ parameters. On the other hand, the number of tunable parameters is only upper-bounded by the total number of gates in the PQC, which is independent of the DLA.  Reducing the number of qubits required is clearly desirable from a quantum algorithm perspective, as is maintaining the minimum number of tunable parameters required for overparameterization.   
\end{enumerate}

\subsection{Dynamical Lie algebras}
  A Lie algebra $\.g$ is a vector space over a field $\mb{F}$ equipped with a bilinear map $[\cdot,\cdot]$ called the \textit{Lie bracket} that satisfies, for all $A,B,C\in\.g$, (i) $[A,A]=0$ and (ii) $[A,[B,C]] + [B,[C,A]]+[C,[A,B]]=0$ (Jacobi identity). We call a set of linearly independent and traceless anti-Hermitian operators on $n$ qubits a {\it dynamical generating set}. For an integer $L \geq 2$ and a dynamical generating set $\+A = \{A_1, \ldots, A_L\}$, the  \textit{dynamical Lie algebra} (DLA) generated by $\+A$, denoted
\eq{
\.g_{\+A} &= \langle \+A \rangle_{\text{Lie}, \R},
}
is the smallest Lie algebra over $\mb{R}$ containing  $\+A$ and closed by the Lie bracket, which is taken to be the commutator $[A_1,A_2] \defeq A_1A_2-A_2A_1$.
For $s \ge 2$, define right-nested commutators of elements of $\+A$ to be nested commutator of
the form $[A_{i_1} , [\ldots [A_{i_{s-1}} , A_{i_s} ]\ldots]]$.

The Jacobi identity implies the following well-known fact, the proof of which is standard, but which, for completeness, we include details of in Appendix~\ref{app:nested-comms}.

\begin{fact}\label{fact:nested} All nested commutators of elements in $\+A$ can be expressed as linear combinations of right-nested commutators.
\end{fact}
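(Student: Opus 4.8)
The plan is to reduce an arbitrary bracketing to right-nested form by repeatedly applying a single rewriting rule derived from the Jacobi identity, controlling the recursion with a carefully chosen induction measure. First I would record the elementary consequences of the axioms: from $[A,A]=0$ and bilinearity one obtains antisymmetry $[A,B]=-[B,A]$, and rewriting the Jacobi identity $[A,[B,C]]+[B,[C,A]]+[C,[A,B]]=0$ with antisymmetry yields the key left-unfolding identity
\begin{equation*}
[[A,B],C] = [A,[B,C]] - [B,[A,C]].
\end{equation*}
This is the only algebraic input: it rewrites a commutator whose left argument is itself a commutator as a combination of brackets in which that inner commutator $[A,B]$ has been broken apart and pushed towards the right.

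The heart of the argument is a lemma: if $P$ and $Q$ are right-nested commutators, of lengths $p$ and $q$ respectively (where length counts the number of generator occurrences), then $[P,Q]$ is a linear combination of right-nested commutators. I would prove this by induction on $p$, the length of the \emph{left} factor. If $p=1$ then $P=A_i$ for some generator, and $[A_i,Q]$ is already right-nested because $Q$ is. If $p\ge 2$ then $P=[A_i,P']$ with $P'$ right-nested of length $p-1$, and the key identity gives $[P,Q]=[A_i,[P',Q]]-[P',[A_i,Q]]$. In the first term, $[P',Q]$ is a bracket of right-nested commutators with left factor of length $p-1$, so by the induction hypothesis it is a combination of right-nested commutators, and prepending $[A_i,\cdot]$ leaves each summand right-nested. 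In the second term, $[A_i,Q]$ is right-nested, and $[P',[A_i,Q]]$ again has left factor $P'$ of length $p-1$, so the induction hypothesis applies directly.

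Finally I would deduce the Fact by induction on the total number $s$ of generator occurrences in an arbitrary nested commutator $w$. For $s=2$, $w=[A_i,A_j]$ is already right-nested. For $s\ge 3$, write $w=[U,V]$, where $U$ and $V$ are nested commutators of lengths strictly between $0$ and $s$; by the induction hypothesis each of $U,V$ is a linear combination of right-nested commutators, so by bilinearity $w$ is a linear combination of brackets $[P,Q]$ of right-nested commutators, and the lemma rewrites each such bracket as a combination of right-nested commutators.

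The step I expect to be the main obstacle, and the reason to isolate the lemma, is choosing the correct induction variable. A naive induction on total length fails: in the identity $[P,Q]=[A_i,[P',Q]]-[P',[A_i,Q]]$ the second term $[P',[A_i,Q]]$ has exactly the same total length as $[P,Q]$, so length alone does not decrease. What does strictly decrease in every term produced by the lemma's recursion is the length of the left factor, which is why the induction must be organized around it rather than around the total degree.
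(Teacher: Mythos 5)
Your proof is correct and follows essentially the same route as the paper's Appendix~\ref{app:nested-comms}: both arguments reduce the problem to brackets $[P,Q]$ of two right-nested commutators and resolve these by a Jacobi-identity unfolding, the only difference being that you peel generators off the left factor $P$ and induct on its length, whereas the paper (Lemma~\ref{lem:nested_any_any}) peels them off the right factor $Q$ and inducts on its degree. Both inductions close for the same reason---the hypothesis is universal in the other factor, so the growth of that factor in one of the two Jacobi terms is harmless---and your closing observation that total length is not a valid induction measure is exactly the point both arguments are organized around.
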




Given a DLA $\.g_\+A$, its \textit{semisimple component}, or \textit{commutator subalgebra}, is defined as $[\.g_\+A,\.g_\+A]\defeq \spn\{[A,B] : A,B\in\.g_\+A\}$, and its \textit{center}
is defined to be $Z(\.g_\+A)\defeq\{A\in\.g_\+A : [A,B]=0 \quad \forall B\in \.g_\+A \}$. 
As we will deal only with real Lie algebras, when we use the terms `linear independence' and `span' these will always mean over $\mb{R}$.  

\begin{fact}[\cite{knapp1996lie}]\label{directsum}
  As a subalgebra of $\.{su}(2^n)$, every dynamical Lie algebra admits a decomposition 
    \eq{
    \.g_\+A = [\.g_\+A,\.g_\+A] \oplus Z(\.g_\+A).
    }
A Lie algebra admitting such a decomposition is called reductive.
\end{fact}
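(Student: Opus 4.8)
The plan is to exploit the fact that $\.g_\+A$ sits inside $\.{su}(2^n)$, which is a compact Lie algebra and therefore carries a distinguished positive-definite, ad-invariant inner product. Concretely, for anti-Hermitian $X,Y$ set $\langle X,Y\rangle \defeq -\tr(XY) = \tr(X^\dag Y)$; this is the Hilbert--Schmidt inner product, so it is real-valued and positive definite on $\.{su}(2^n)$ (indeed $\langle X,X\rangle = \tr(X^\dag X)\ge 0$ with equality iff $X=0$), and cyclicity of the trace makes it invariant in the sense that $\langle [A,X],Y\rangle + \langle X,[A,Y]\rangle = 0$ for all $A,X,Y$. First I would restrict this form to the subalgebra $\.g_\+A$, where it remains positive definite and invariant under the adjoint action of $\.g_\+A$ on itself.

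The key steps then run as follows. The commutator subalgebra $[\.g_\+A,\.g_\+A]$ is an ideal of $\.g_\+A$, so I would form its orthogonal complement $W$ inside $\.g_\+A$ with respect to $\langle\cdot,\cdot\rangle$. Because the restricted form is positive definite, we obtain a vector-space decomposition $\.g_\+A = [\.g_\+A,\.g_\+A]\oplus W$ with trivial overlap. The heart of the argument is to identify $W$ with the center: if $X\in W$ then $\langle X,[A,B]\rangle = 0$ for all $A,B\in\.g_\+A$, and invariance rewrites this as $\langle [A,X],B\rangle = 0$; taking $B=[A,X]\in\.g_\+A$ and using positive definiteness forces $[A,X]=0$ for every $A$, so $X\in Z(\.g_\+A)$. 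The reverse inclusion is immediate, since a central element is orthogonal to every commutator by the same invariance identity. Hence $W = Z(\.g_\+A)$, and it only remains to note that the decomposition is a genuine Lie-algebra direct sum: $[\.g_\+A,\.g_\+A]$ is an ideal, $Z(\.g_\+A)$ is central and hence an ideal, and the bracket between the two summands vanishes.

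The step I expect to be the main obstacle---or at least the one carrying all the real content---is the identification $W = Z(\.g_\+A)$, and in particular the appeal to positive definiteness to pass from $\langle [A,X],[A,X]\rangle = 0$ to $[A,X]=0$. This is exactly where membership in $\.{su}(2^n)$ is indispensable: the argument breaks down for a general real Lie algebra, whose invariant form (e.g.\ the Killing form) may be degenerate or indefinite, in which case reductivity can genuinely fail. Everything else is routine: the invariance identity is a one-line trace computation, and the ideal/center bookkeeping is immediate once $W$ has been pinned down.
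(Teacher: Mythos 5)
Your proof is correct and complete: the Hilbert--Schmidt form $-\tr(XY)$ is indeed real, symmetric, positive definite and ad-invariant on anti-Hermitian matrices, and your identification of the orthogonal complement of $[\.g_\+A,\.g_\+A]$ inside $\.g_\+A$ with $Z(\.g_\+A)$ --- in particular the key step of taking $B=[A,X]$ and invoking positive definiteness to conclude $[A,X]=0$ --- is exactly the standard argument for reductivity of subalgebras of a compact Lie algebra. The paper itself gives no proof of this fact (it is quoted with a citation to Knapp), and your argument is precisely the one underlying that reference, so there is nothing to compare beyond confirming that every step checks out.
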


We say that $\+A$ is a {\it Pauli dynamical generating set} if, for all $j\in[L]$, $A_j = iP_j$ where $P_j$ is an $n$-qubit Pauli operator (i.e., an $n$-fold tensor product of the single qubit Pauli operators $I,X,Y,Z$), and we call the corresponding DLA a \textit{Pauli DLA}.

\section{Our Results}
We consider two ways of modifying a DLA generating set. In the first (cardinality-increasing modifications), the number of generators in the modified set is larger than the original generating set.

\begin{restatable}{thm}{ExtendGenSetNaive}\label{thm:extendgensetnaive}
Let  $\+A = \{A_1, \ldots, A_L\}$ be a dynamical generating set, and $\chi$ a Hermitian operator with $K$ distinct  eigenvalues. Define $\+A' = \{A_i \otimes \chi^j : i\in[L], 0\le j\le K-1\}$. Then, $\.g_{\+A'}   \cong \bigoplus_{j=1}^{K}\.g_\+A$.
\end{restatable}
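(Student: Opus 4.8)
The plan is to diagonalize $\chi$ and exploit the orthogonality of its spectral projectors. Since $\chi$ is Hermitian with $K$ distinct eigenvalues, I would write its spectral decomposition $\chi = \sum_{k=1}^K \lambda_k \Pi_k$, where the $\lambda_k$ are distinct and the $\Pi_k$ are mutually orthogonal projectors with $\Pi_k \Pi_{k'} = \delta_{kk'}\Pi_k$. Then $\chi^j = \sum_{k=1}^K \lambda_k^j \Pi_k$, and the coefficient matrix $(\lambda_k^j)_{0 \le j \le K-1,\, 1 \le k \le K}$ is a Vandermonde matrix with distinct nodes, hence invertible. Thus for each fixed $i$,
\[
\spn\{A_i \otimes \chi^j : 0 \le j \le K-1\} = \spn\{A_i \otimes \Pi_k : 1 \le k \le K\}.
\]
Because the Lie algebra generated by a set depends only on the span of that set (any Lie subalgebra is in particular a subspace), this gives $\.g_{\+A'} = \langle \{A_i \otimes \Pi_k : i \in [L],\, 1 \le k \le K\}\rangle_{\mathrm{Lie},\R}$, so it suffices to analyse the generators $A_i \otimes \Pi_k$.

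First I would record the crucial commutation rule. Using $\Pi_k\Pi_{k'} = \delta_{kk'}\Pi_k$,
\[
[A_i \otimes \Pi_k,\, A_{i'} \otimes \Pi_{k'}] = \delta_{kk'}\,[A_i, A_{i'}] \otimes \Pi_k,
\]
so generators carrying distinct labels $k \ne k'$ commute, while within a fixed label the bracket mirrors that of $\.g_\+A$. This motivates defining, for each $k$, the linear map $\phi_k : X \mapsto X \otimes \Pi_k$. Since $\Pi_k^2 = \Pi_k$, one checks $[\phi_k(X), \phi_k(Y)] = [X,Y] \otimes \Pi_k = \phi_k([X,Y])$, so $\phi_k$ is a Lie algebra homomorphism, and it is injective because $\Pi_k \ne 0$. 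Hence its image $\.h_k \defeq \phi_k(\.g_\+A) = \{X \otimes \Pi_k : X \in \.g_\+A\}$ is a subalgebra isomorphic to $\.g_\+A$; and because $\phi_k$ sends the generators $A_i$ to $A_i \otimes \Pi_k$, we get $\.h_k = \langle \{A_i \otimes \Pi_k : i \in [L]\}\rangle_{\mathrm{Lie},\R}$.

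It then remains to assemble the pieces into a direct sum. I would show $\sum_{k=1}^K \.h_k$ is a Lie subalgebra: it is bracket-closed since $[\.h_k,\.h_k] \subseteq \.h_k$ and $[\.h_k,\.h_{k'}] = 0$ for $k \ne k'$ by the commutation rule above. As this sum contains every generator $A_i \otimes \Pi_k$, while each $\.h_k$ lies inside the generated algebra, it equals $\.g_{\+A'}$. The sum is direct: if $\sum_k X_k \otimes \Pi_k = 0$, right-multiplying by $I \otimes \Pi_{k_0}$ and using orthogonality isolates $X_{k_0} \otimes \Pi_{k_0} = 0$, forcing $X_{k_0} = 0$. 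Together with the vanishing cross-brackets this exhibits $\.g_{\+A'} = \bigoplus_{k=1}^K \.h_k$ as an internal direct sum of Lie algebras, and since each $\.h_k \cong \.g_\+A$ we conclude $\.g_{\+A'} \cong \bigoplus_{k=1}^K \.g_\+A$.

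The main obstacle I anticipate is not any single computation but the bookkeeping that makes the span equality at the generator level lift correctly to an equality of generated Lie algebras, and that fits together the three claims — each $\.h_k$ is the \emph{full} algebra generated by its own labelled generators, the cross-brackets vanish, and the sum is direct — so that $\sum_k \.h_k$ is closed and genuinely \emph{equals} $\.g_{\+A'}$ rather than merely containing or being contained in it. Everything hinges on the idempotence and orthogonality $\Pi_k\Pi_{k'} = \delta_{kk'}\Pi_k$, which is exactly what distinctness of the eigenvalues of $\chi$ supplies through the Vandermonde step.
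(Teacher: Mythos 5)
Your proposal is correct and follows essentially the same route as the paper: the Vandermonde step is the paper's Lemma~\ref{lem:projectors}, and the assembly of the spectral projectors into an internal direct sum via $\Pi_k\Pi_{k'}=\delta_{kk'}\Pi_k$ is exactly the paper's Lemma~\ref{lem:copies_from_projectors}. The extra bookkeeping you spell out (injectivity of $\phi_k$, directness of the sum, equality with $\.g_{\+A'}$) is implicit in those lemmas, so there is no gap.
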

While the cardinality of the modified set is $K\abs{\+A}$ -- the same as the naive approach mentioned in the introduction -- only at most $\lceil \log K \rceil$ additional qubits are required to generate a DLA that is isomorphic to a direct sum of $K$ copies of $\.g_\+A$. This is because it is possible to define a Hermitian operator $\chi$ on $ \lceil \log K \rceil$ qubits that has any $K \ge 1$ distinct  eigenvalues.  The factor $K$ increase in the number of generators may seem large.  However, for any $\+A$, $\dim(Z(\.g_{\+A}))\le \abs{\+A}$~\cite{allcock2024dynamical}, and this bound is tight. Thus, in the worst case, this multiplicative factor of $K$ is necessary.  Reducing the number of generators to fewer than $K\abs{\+A}$ therefore requires sacrificing copies of the center, and our remaining results focus on generating direct powers of the commutator subalgebra $[\.g_\+A,\.g_\+A]$ instead of generating direct powers of $\.g_{\+A}$. 


Let $V$ be a finite dimensional Hilbert space, $\chi:V\ra V$ a Hermitian operator and let
$I$ denote the identity operator on $V$. For $\+A = \{A_1, \ldots, A_L\}$ and $S \subseteq [L],$ we denote by $\+A_S$ the set of operators 
$\{A_i \otimes I : i \in [L] \} \cup \{A_i \otimes \chi : i \in S\}$. 
To simplify notation, for $1 \le i \leq L$,  we write $\+A_i$ for $\+A_{\{i\}}$.

\begin{restatable}{thm}{ExtendGenSet}\label{thm:extendgenset}
Let  $\+A = \{A_1, \ldots, A_L\}$ be a dynamical generating set, and $\chi$ a Hermitian operator with $K$ distinct eigenvalues. Then,
 \eq{
  [\.g_{\+A_{[L]}}, \.g_{\+A_{[L]}}]  \cong \bigoplus_{j=1}^{K}[\.g_\+A,\.g_\+A] \text{ ~~ and ~~ } 
  \dim(Z(\.g_{\+A_{[L]}})) = 2\dim(Z(\.g_\+A)).
  }
\end{restatable}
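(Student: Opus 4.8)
The plan is to realize $\.g_{\+A_{[L]}}$ as a block-diagonal subalgebra of $K$ copies of $\.g_\+A$ and then to show its commutator subalgebra fills out all $K$ blocks. Write the spectral decomposition $\chi = \sum_{k=1}^{K}\lambda_k \Pi_k$ with distinct eigenvalues $\lambda_k$ and orthogonal eigenprojectors $\Pi_k$. Since $I$ and $\chi$ commute, the fundamental bracket identity is $[A_i\otimes\chi^a, A_j\otimes\chi^b]=[A_i,A_j]\otimes\chi^{a+b}$, which extends to right-nested commutators: a right-nested commutator of the generators of $\+A_{[L]}$ equals $B\otimes\chi^m$, where $B$ is the corresponding right-nested commutator of the $A_i$ (so $B\in\.g_\+A$) and $m$ is the number of factors carrying a $\chi$. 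Using $\chi^m=\sum_k\lambda_k^m\Pi_k$ together with Fact~\ref{fact:nested}, this shows $\.g_{\+A_{[L]}}$ sits inside $\bigoplus_{k=1}^{K}\.g_\+A$ as block-diagonal operators $\sum_k C_k\otimes\Pi_k$ on which the bracket acts blockwise; in particular $[\.g_{\+A_{[L]}},\.g_{\+A_{[L]}}]\subseteq\bigoplus_{k=1}^{K}[\.g_\+A,\.g_\+A]$.

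For the reverse inclusion I would first prove the generation lemma: for every $B\in[\.g_\+A,\.g_\+A]$ and every integer $m\ge 0$, the element $B\otimes\chi^m$ lies in $\.g_{\+A_{[L]}}$. For $m\in\{0,1\}$ this is immediate, since $B$ is a combination of right-nested commutators of length $\ge 2$ and decorating any such commutator with exactly $m\le 1$ copies of $\chi$ produces $B\otimes\chi^m$. The higher powers are obtained by induction on $m$, using that $[\.g_\+A,\.g_\+A]$ is semisimple (Fact~\ref{directsum}) and hence perfect: writing $B=\sum_\beta[C_\beta,D_\beta]$ with $C_\beta,D_\beta\in[\.g_\+A,\.g_\+A]$, the identity $[C_\beta\otimes\chi, D_\beta\otimes\chi^m]=[C_\beta,D_\beta]\otimes\chi^{m+1}$ promotes exponent $m$ to $m+1$. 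Since the $\lambda_k$ are distinct, each eigenprojector is a degree-$(K-1)$ polynomial in $\chi$ (Lagrange interpolation), so $B\otimes\Pi_k=\sum_{m=0}^{K-1}c_m\,B\otimes\chi^m\in\.g_{\+A_{[L]}}$, giving $\bigoplus_k[\.g_\+A,\.g_\+A]\subseteq\.g_{\+A_{[L]}}$. Because a semisimple algebra equals its own commutator subalgebra, this upgrades to $\bigoplus_k[\.g_\+A,\.g_\+A]\subseteq[\.g_{\+A_{[L]}},\.g_{\+A_{[L]}}]$, which together with the first paragraph yields the claimed isomorphism.

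For the center I would use reductivity (Fact~\ref{directsum}), which gives $\dim Z(\.g_{\+A_{[L]}})=\dim\bigl(\.g_{\+A_{[L]}}/[\.g_{\+A_{[L]}},\.g_{\+A_{[L]}}]\bigr)$. Let $P:\.g_\+A\to Z(\.g_\+A)$ be the projection along $[\.g_\+A,\.g_\+A]$ and consider $\Phi:\.g_{\+A_{[L]}}\to Z(\.g_\+A)^{\oplus K}$, $\sum_k C_k\otimes\Pi_k\mapsto(P(C_1),\ldots,P(C_K))$, whose kernel is exactly $\bigoplus_k[\.g_\+A,\.g_\+A]$; thus $\dim Z(\.g_{\+A_{[L]}})=\dim\Phi(\.g_{\+A_{[L]}})$. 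Evaluating $\Phi$ on the spanning right-nested commutators $B\otimes\chi^m$, those of length $\ge 2$ have $B\in[\.g_\+A,\.g_\+A]$ and map to $0$, while the generators contribute $P(A_i)\otimes(1,\ldots,1)$ and $P(A_i)\otimes(\lambda_1,\ldots,\lambda_K)$ under the identification $Z(\.g_\+A)^{\oplus K}\cong Z(\.g_\+A)\otimes\R^K$. Since $\{P(A_i)\}$ spans $Z(\.g_\+A)$ (all longer commutators project to $0$), the image is $Z(\.g_\+A)\otimes\spn\{(1,\ldots,1),(\lambda_1,\ldots,\lambda_K)\}$, and these two vectors are independent precisely because the $\lambda_k$ are not all equal for $K\ge 2$. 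Hence $\dim Z(\.g_{\+A_{[L]}})=2\dim Z(\.g_\+A)$.

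The main obstacle is the reverse inclusion for the commutator subalgebra: short nested commutators only produce low powers of $\chi$, whereas separating all $K$ diagonal blocks by interpolation requires $B\otimes\chi^m$ up to $m=K-1$. The device that overcomes this is the perfectness of $[\.g_\+A,\.g_\+A]$, which lets me trade a bracket for an additional power of $\chi$ and thereby reach arbitrarily high exponents.
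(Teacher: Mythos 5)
Your proof is correct, and while the overall strategy (sandwiching $[\.g_{\+A_{[L]}},\.g_{\+A_{[L]}}]$ between $[\.g_\+A,\.g_\+A]\otimes\spn(\+U_\chi)$ from above and below, then passing to projectors via Vandermonde/Lagrange interpolation) matches the paper's Lemma~\ref{thm:all}, your two key steps are genuinely different. For the lower bound on the commutator ideal, the paper inducts on $j$ over a right-nested $\+A$-commutator basis, replacing the innermost generator $A_{k_1}$ by its component $W\in[\.g_\+A,\.g_\+A]$ and bracketing $W\otimes\chi^j$ against the generator $A_{k_2}\otimes\chi$; you instead invoke perfectness of the semisimple part to write $B=\sum_\beta[C_\beta,D_\beta]$ with both factors in $[\.g_\+A,\.g_\+A]$ and bracket $C_\beta\otimes\chi$ against $D_\beta\otimes\chi^m$. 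Both devices trade one bracket for one extra power of $\chi$; yours avoids the right-nested basis machinery but leans on the standard fact that the derived algebra of a reductive (compact) Lie algebra is semisimple, hence perfect, which the paper does not prove, whereas the paper's argument only needs the decomposition in Fact~\ref{directsum} and membership of $A_{k_2}\otimes\chi$ in the generating set. For the center the routes diverge more: the paper explicitly identifies $Z(\.g_{\+A_{[L]}})=\spn(Z(\.g_\+A)\otimes\{\chi^0,\chi^1\})$ by showing the coefficients $C_j$ for $j\ge 2$ lie in $Z(\.g_\+A)\cap[\.g_\+A,\.g_\+A]=\{0\}$, while you only compute the dimension via reductivity, $\dim Z=\dim\.g-\dim[\.g,\.g]$, and a rank--nullity count for the quotient map $\Phi$ (this is close in spirit to the paper's Lemma~\ref{lem:seed-center-gg}, which it deploys only for Theorem~\ref{thm:twogenerators}). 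Your count is cleaner and suffices for the stated theorem, but yields less information: the explicit description of the center is what the paper's Lemma~\ref{thm:all} records and reuses. One small caveat common to both arguments: the independence of $(1,\ldots,1)$ and $(\lambda_1,\ldots,\lambda_K)$, like the independence of $\chi^0,\chi^1$ in the paper, silently assumes $K\ge 2$, which is forced anyway by linear independence of $\+A_{[L]}$.
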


By doubling the number of generators to $2\abs{\+A}$ and by adding only $\lceil\log K\rceil$ qubits, it is therefore possible to generate a DLA whose semisimple component is isomorphic to the direct sum of $K$ copies of  $[\.g_\+A,\.g_\+A]$.

We are able to show slightly stronger results for two special cases. In both, the cardinality of the modified generating set need 
only be one larger than the original set $\+A$. The first case is when the elements of $\+A$ are all $n$-qubit Pauli strings:

\begin{restatable}{thm}{PauliGenSet}\label{thm:pauli}
For every Pauli dynamical generating set $\+A = \{A_1, \ldots, A_L\}$ 
    whose anticommutation
   graph is connected, we have, for every $1\leq i \leq L$, 
    \eq{ 
   \.g_{\+A_{i}} \cong \bigoplus_{j=1}^{K}\.g_\+A.
    }
\end{restatable}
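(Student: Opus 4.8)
The plan is to work in the spectral basis of $\chi$. Writing $\chi = \sum_{k=1}^{K}\lambda_k \Pi_k$ with distinct eigenvalues $\lambda_k$ and orthogonal spectral projectors $\Pi_k$ (so $\Pi_k\Pi_{k'} = \delta_{kk'}\Pi_k$), I would first reduce the theorem to the single claim
\[
\.g_{\+A_i} = \bigoplus_{k=1}^{K} \left(\.g_\+A \otimes \Pi_k\right),
\]
after which the isomorphism follows: the maps $X \otimes \Pi_k \mapsto X$ are Lie-algebra isomorphisms on each block (using $\Pi_k^2=\Pi_k\neq 0$ and $[X\otimes\Pi_k, Y\otimes\Pi_k]=[X,Y]\otimes\Pi_k$), while blocks with $k\neq k'$ commute because $\Pi_k\Pi_{k'}=0$. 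The inclusion $\subseteq$ is immediate: each generator $A_j\otimes I=\sum_k A_j\otimes\Pi_k$ and $A_i\otimes\chi=\sum_k\lambda_k\, A_i\otimes\Pi_k$ lies in the right-hand side, which is itself a Lie algebra, so the generated algebra is contained in it.

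For the reverse inclusion I would use that a Pauli DLA has a basis of Pauli generators $\{iQ : Q\in\mathcal{Q}\}$, since the bracket of two anti-Hermitian Paulis is either zero (commuting case) or a nonzero real multiple of another Pauli generator $iR$ with $R\propto QQ'$ (anticommuting case). The target then reduces to showing $iQ\otimes\Pi_k\in\.g_{\+A_i}$ for every $Q\in\mathcal{Q}$ and every $k$. Because the coefficient matrix $[\lambda_k^m]$ is Vandermonde and hence invertible, this is equivalent — and more convenient — to showing $iQ\otimes\chi^m\in\.g_{\+A_i}$ for all $Q\in\mathcal{Q}$ and all $0\le m\le K-1$; I will in fact produce these for every power $m\ge 0$.

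The engine of the argument is a reachability statement phrased in the symplectic $\mathbb{F}_2^{2n}$ picture of Pauli strings, where multiplication is vector addition $v_{QQ'}=v_Q+v_{Q'}$ and (anti)commutation is the symplectic inner product. The key preliminary observation is that connectivity lifts from the generators to the whole DLA: if $u=u'+u''$ is produced by a closure step from anticommuting $u',u''\in\mathcal{Q}$, then $u$ anticommutes with $u'$, so by induction every element of $\mathcal{Q}$ lies in the single connected component containing the (already connected) generators. In particular the full anticommutation graph on $\mathcal{Q}$ is connected and has no isolated vertices, so $Z(\.g_\+A)=0$, consistent with the target being the full direct power. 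Granting all $iQ\otimes I$ (power $0$, from $\.g_\+A\otimes I\subseteq\.g_{\+A_i}$) and starting from $iP_i\otimes\chi$ (power $1$), bracketing a power-$1$ element with an anticommuting power-$0$ element again yields a power-$1$ element; walking along a path in the connected graph then shows every $iQ\otimes\chi$ is reached. To raise the power I would, for any target $R$, pick a neighbour $S$ of $R$ and write $v_R=v_{RS}+v_S$ with $RS,S\in\mathcal{Q}$ anticommuting; bracketing $i(RS)\otimes\chi^m$ with $iS\otimes\chi$ then produces $iR\otimes\chi^{m+1}$ up to a nonzero scalar. Induction on $m$ (base case: all of $\mathcal{Q}$ at power $1$) gives every $Q$ at every power, which completes the reverse inclusion and the proof.

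The main obstacle I anticipate is the reachability bookkeeping: carefully tracking which $(\text{Pauli})\otimes\chi^m$ elements are produced by brackets, and in particular establishing that the anticommutation graph on the \emph{entire} DLA — not just on the generating set — is connected, since this is exactly what lets the single attached factor $\chi$ propagate to every basis element and to every power. The power-raising induction must also be arranged so that at each step both bracketed operators carry a factor of $\chi$ (powers $m$ and $1$) rather than collapsing back to power $m$; the decomposition $v_R=v_{RS}+v_S$ through a neighbour is what guarantees this is always possible once full connectivity is in hand.
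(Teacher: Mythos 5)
Your proof is correct, but it takes a genuinely different route from the paper's. The paper never works with the full Pauli basis of $\.g_\+A$: it only shows (a) each generator $A_j$ lies in $[\.g_\+A,\.g_\+A]$ (via the double bracket $[A_k,[A_k,A_j]]\propto A_j$ for an anticommuting neighbour $A_k$, so $\.g_\+A$ is semisimple), and (b) each $A_j\otimes\chi$ lies in $[\.g_{\+A_i},\.g_{\+A_i}]$, by induction on the distance from $A_i$ in the \emph{generator} anticommutation graph, using $[A_k\otimes\chi,[A_k\otimes I,A_j\otimes I]]\propto A_j\otimes\chi$. This yields $\.g_{\+A_i}=\.g_{\+A_{[L]}}=[\.g_{\+A_{[L]}},\.g_{\+A_{[L]}}]$, and the direct-power structure is then imported wholesale from Theorem~\ref{thm:extendgenset}, whose proof (Lemma~\ref{thm:all}) already contains the power-raising induction on $\chi^j$ in a general, non-Pauli setting. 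You instead rebuild everything by hand: you prove that connectivity of the anticommutation graph lifts from the generators to the entire Pauli basis $\mathcal{Q}$ of the DLA (correct, since $u'+u''$ anticommutes with $u'$ when $u',u''$ do), reach every $iQ\otimes\chi$ by walking along that larger graph, and then raise powers via $[i(RS)\otimes\chi^m,\,iS\otimes\chi]\propto iR\otimes\chi^{m+1}$. Your argument is self-contained and makes the mechanism of $\chi$-propagation completely explicit, at the cost of an extra lemma (connectivity of the full $\mathcal{Q}$-graph) that the paper avoids by staying on the generator graph and delegating to Theorem~\ref{thm:extendgenset}. One small point to tighten when writing it up: the ``walking'' step at power one moves from $iR\otimes\chi$ to $iS\otimes\chi$ for adjacent $R,S$ only via the intermediate $i(RS)\otimes\chi$, i.e.\ it is a two-bracket move of the form $[A_k\otimes\chi,[A_k\otimes I,A_j\otimes I]]$ rather than a single bracket; this is exactly the paper's inductive step and is easily supplied.
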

This theorem can be seen as generalizing part of the result of Aguilar et al.~\cite{aguilar2024full}. There, the authors showed that all Pauli dynamical generating sets $\+A$ satisfy $\.g_{\+A}\cong\.g_{\+B'}=\bigoplus_{i=1}^{2^{n_c}}\.g_{\+B}$, where $n_c\in\mb{Z}_+$ (the number of `control qubits') and $\+B'=\+B \cup \+C$, where  $\+B=\{B_1,..., B_{L-n_c}\}$ and $\+C=\{C_1,..,C_{n_c}\}$ are disjoint Pauli generating sets. The elements of $\+B'$ are Pauli strings on $n+n_c$ qubits, for some integer $n$, where  each $B\in \+B$ has support on the first $n$ qubits, and each  $C_j\in\+C$ has the form $C_j= B_2\otimes P_j$, where $P_j$ has support on qubit $n+j$. In other words, $\.g_\+B'$ can be seen as the DLA that results from repeatedly applying Theorem~\ref{thm:pauli} to $\+B$, each time adding $B_2\otimes P_j$ to the generating set (i.e., setting $\chi=P_j$) for $j=1,..., n_c$. As Pauli operators have eigenvalues $\pm 1$, this leads to a doubling of the DLA each time.

The second special case is when $\+A$ has cardinality $2$:

\begin{restatable}{thm}{CardTwoGenSet}\label{thm:twogenerators} For every dynamical generating set $\+A = \{A_1, A_2\}$ we have, for $i \in \{1,2\}$,
    \eq{
    [\.g_{\+A_{i}}, \.g_{\+A_{i}}] 
    \cong \bigoplus_{j=1}^K [\.g_\+A, \.g_\+A]
    }
and
   \eq{
\dim( Z(\.g_{\+A_{i}})) &=
\begin{cases}
\dim( Z(\.g_\+A)) &  \text{if 
$A_{i} \in [\.g_\+A, \.g_\+A]$,} \\
\dim( Z(\.g_\+A)) +1 & \text{otherwise.}
\end{cases}
}
\end{restatable}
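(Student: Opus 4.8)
The plan is to study the single-decoration generating set $\+A_i$ (by relabelling we may take $i=1$) through the spectrum of $\chi$. Writing $\lambda_1,\dots,\lambda_K$ for the distinct eigenvalues of $\chi$, the substitution $\chi\mapsto\lambda_k$ (realized by restricting to the $\lambda_k$-eigenspace) yields for each $k$ a Lie algebra homomorphism $\psi_k\colon\.g_{\+A_1}\to\.g_\+A$ sending $A_1\otimes I,\ A_2\otimes I,\ A_1\otimes\chi$ to $A_1,\ A_2,\ \lambda_k A_1$; each $\psi_k$ is onto $\.g_\+A$. Assembling them gives a homomorphism $\Psi=(\psi_1,\dots,\psi_K)\colon\.g_{\+A_1}\to\bigoplus_{k=1}^K\.g_\+A$. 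Since any element of $\.g_{\+A_1}\subseteq\.g_\+A\otimes\spn\{I,\chi,\dots,\chi^{K-1}\}$ is determined by its values at the $K$ distinct scalars $\lambda_k$, a Vandermonde argument shows $\Psi$ is injective. It therefore suffices to identify the image $\mathfrak{s}:=\Psi(\.g_{\+A_1})$, namely the subalgebra of $\bigoplus_{k}\.g_\+A$ generated by the diagonal elements $(A_1,\dots,A_1)$, $(A_2,\dots,A_2)$ and by $b:=(\lambda_1 A_1,\dots,\lambda_K A_1)$.

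The heart of the proof is to show $\mathfrak{s}\supseteq\bigoplus_{k=1}^K[\.g_\+A,\.g_\+A]$. For $d\ge 0$ set $T_d:=\{w\in\.g_\+A:(\lambda_1^d w,\dots,\lambda_K^d w)\in\mathfrak{s}\}$. Bracketing the generators gives three relations: each $T_d$ is an ideal of $\.g_\+A$ (since the diagonal copy lies in $\mathfrak{s}$), $[T_d,T_{d'}]\subseteq T_{d+d'}$, and $[A_1,T_d]\subseteq T_{d+1}$ (bracketing with $b$ multiplies the slotwise scalar by $\lambda_k$). As $b\in\mathfrak{s}$ we have $A_1\in T_1$, so $T_1$ contains the ideal generated by $A_1$; because $\.g_\+A$ modulo that ideal is generated by the single element (the image of) $A_2$, it is abelian, whence $[\.g_\+A,\.g_\+A]\subseteq T_1$. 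This is the one place where the hypothesis $\abs{\+A}=2$ is essential, and it is the main obstacle: with three or more generators the quotient by the ideal generated by one of them need not be abelian, which is precisely why the general statement (Theorem~\ref{thm:extendgenset}) must decorate every generator. To propagate $[\.g_\+A,\.g_\+A]$ to all powers despite only $A_1$ being decorated, I invoke reductivity (Fact~\ref{directsum}): $[\.g_\+A,\.g_\+A]$ is semisimple, hence perfect. An induction on $d$ using $[T_1,T_{d-1}]\subseteq T_d$ and $[[\.g_\+A,\.g_\+A],[\.g_\+A,\.g_\+A]]=[\.g_\+A,\.g_\+A]$ then yields $[\.g_\+A,\.g_\+A]\subseteq T_d$ for all $d$.

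Consequently $(\lambda_k^d w)_k\in\mathfrak{s}$ for every $w\in[\.g_\+A,\.g_\+A]$ and $0\le d\le K-1$; a final Vandermonde inversion (the matrix $[\lambda_k^d]$ is invertible for distinct $\lambda_k$) isolates each slot and gives $N:=\bigoplus_{k=1}^K[\.g_\+A,\.g_\+A]\subseteq\mathfrak{s}$. The first conclusion is now formal: $\mathfrak{s}/N$ embeds in $\bigoplus_k\bigl(\.g_\+A/[\.g_\+A,\.g_\+A]\bigr)$, which is abelian, so $[\mathfrak{s},\mathfrak{s}]\subseteq N$, while $N$ is perfect so $N\subseteq[\mathfrak{s},\mathfrak{s}]$; hence $[\.g_{\+A_1},\.g_{\+A_1}]\cong[\mathfrak{s},\mathfrak{s}]=N\cong\bigoplus_{j=1}^K[\.g_\+A,\.g_\+A]$.

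For the center, reductivity of $\mathfrak{s}$ gives $\dim Z(\.g_{\+A_1})=\dim(\mathfrak{s}/[\mathfrak{s},\mathfrak{s}])=\dim(\mathfrak{s}/N)$, which equals the dimension of the span of the images of $(A_1,\dots,A_1)$, $(A_2,\dots,A_2)$ and $b$ in $\bigoplus_k Z(\.g_\+A)$. Writing $z_j$ for the projection of $A_j$ onto $Z(\.g_\+A)$ along $[\.g_\+A,\.g_\+A]$, one has $\spn\{z_1,z_2\}=Z(\.g_\+A)$ and $z_1=0$ exactly when $A_1\in[\.g_\+A,\.g_\+A]$. A short case analysis on whether $z_1=0$, using that $(1,\dots,1)$ and $(\lambda_1,\dots,\lambda_K)$ are linearly independent for $K\ge 2$, shows this dimension equals $\dim Z(\.g_\+A)$ when $A_1\in[\.g_\+A,\.g_\+A]$ and $\dim Z(\.g_\+A)+1$ otherwise, which is the claimed formula.
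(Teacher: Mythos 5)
Your proof is correct, and it takes a genuinely different route from the paper's. The paper stays inside $\.g_\+A \otimes \spn(\+U_\chi)$ and argues by induction on the power $j$ of $\chi$: every element of a right-nested $\+A$-commutator basis has innermost bracket $[A_2,A_1]$ (this is where $\abs{\+A}=2$ enters there), so one can place $A_1\otimes\chi$ in that innermost slot and replace the outer generators by their semisimple parts $W_{k}$ to climb from $V\otimes\chi^j$ to $V\otimes\chi^{j+1}$; the center is then handled by an explicit four-case table combined with the counting identity of Lemma~\ref{lem:seed-center-gg}. You instead transport the whole problem to $\bigoplus_{k}\.g_\+A$ via the spectral homomorphisms $\psi_k$ and run an ideal-theoretic argument: the filtration $T_d$, the observation that $\.g_\+A$ modulo the ideal generated by $A_1$ is abelian (your cleaner isolation of where $\abs{\+A}=2$ is used), and perfectness of $[\.g_\+A,\.g_\+A]$ to propagate it through all $T_d$. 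Both proofs rest on the same two pillars --- reductivity (Fact~\ref{directsum}) and Vandermonde invertibility for distinct eigenvalues --- but yours is more structural: it makes transparent that the only obstruction to decorating a single generator $A_i$ of a larger set is whether $\.g_\+A$ modulo the ideal generated by $A_i$ is abelian, a generalization the right-nested-basis induction does not immediately suggest. The price is the extra input that the semisimple component is perfect, which the paper's induction avoids needing. Your center computation via $\dim(\.s/[\.s,\.s])$ and the central projections $z_1,z_2$ is equivalent to the paper's table; just note explicitly that $K\ge 2$ is required for $(1,\dots,1)$ and $(\lambda_1,\dots,\lambda_K)$ to be independent (for $K=1$ the set $\+A_1$ fails to be linearly independent, so this is implicit in the hypotheses).
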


Our second way of modifying a DLA generating set (cardinality-preserving modifications) does so without increasing the number of generators, but applies only to certain DLAs that satisfy a property we call cyclic. We defer a technical definition of this property to Section~\ref{sec:gensetcardincrease}, but note here that cyclic DLAs include all QAOA-MaxCut DLAs, all Pauli DLAs and all two-generator DLAs where one generator $A$ satisfies $A^2 = \lambda I$ for some $\lambda < 1$. While for cardinality-increasing modifications we introduce a Hermitan operator $\chi$, in this case we require a Hermitian operator $Q$ that is sign unambiguous, i.e., if $\lambda\neq 0$ is an eigenvalue of $Q$, then $-\lambda$ is not also an eigenvalue.

\begin{restatable}{thm}{CardGenSetFixed}\label{thm:comcyc}
Let $\+A = \{A_1, \ldots, A_L\}$ be a cyclic dynamical generating set,
and $Q$ a sign unambiguous
 Hermitian operator with $K$ distinct non-zero eigenvalues.
Set $\+A_{Q}=\{A_1\otimes Q, \ldots, A_L\otimes Q\}$. Then, 
\eq{
[\.g_{\+A_{ Q}},\.g_{\+A_{ Q}}]\cong \bigoplus_{j=1}^{K} [\.g_{\+A},\.g_{\+A}]
,
}
and 
\eq{
 Z(\.g_{\+A_{Q}}) =   \{C\otimes Q : C \in Z(\.g_{\+A})\}.
}   
\end{restatable}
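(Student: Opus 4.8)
The plan is to track the \emph{length} of nested commutators through powers of $Q$, and then use a Vandermonde-type argument to separate the $K$ eigenspaces of $Q$. The starting observation is the bookkeeping identity, proved by a one-line induction on $s$ using $(A\otimes Q)(B\otimes Q^{t})=AB\otimes Q^{t+1}$:
\[
[A_{i_1}\otimes Q,[\ldots,[A_{i_{s-1}}\otimes Q,\,A_{i_s}\otimes Q]\ldots]]=[A_{i_1},[\ldots,[A_{i_{s-1}},A_{i_s}]\ldots]]\otimes Q^{s}.
\]
Writing $\.g^{(s)}$ for the span of the length-$s$ right-nested commutators of $\+A$ (so $\.g^{(1)}=\spn\+A$, $[\.g^{(s)},\.g^{(t)}]\subseteq\.g^{(s+t)}$ by Fact~\ref{fact:nested}, and $[\.g_\+A,\.g_\+A]=\sum_{s\ge 2}\.g^{(s)}$), Fact~\ref{fact:nested} then shows that $\.g_{\+A_{Q}}$ is spanned by the operators $X\otimes Q^{s}$ with $X\in\.g^{(s)}$.

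Next I would diagonalize $Q=\sum_{k=1}^{K}\lambda_k\Pi_k$ into its (nonzero) eigenprojectors, with eigenspaces $V_k\subseteq V$, and define $\phi=(\phi_1,\dots,\phi_K):\.g_{\+A_{Q}}\to\bigoplus_{k=1}^{K}\.g_\+A$, where $\phi_k$ reads off the action of an operator on the block $\mathcal H\otimes V_k$, so that $\phi_k(X\otimes Q^{s})=\lambda_k^{s}X$. Each generator $A_i\otimes Q$ acts on block $k$ as $\lambda_k A_i\otimes I$, and since a nonzero rescaling of the generators leaves the generated algebra unchanged, each $\phi_k$ is a \emph{surjective} Lie homomorphism onto $\.g_\+A$; as every $Y\in\.g_{\+A_{Q}}$ satisfies $Y=\bigoplus_k\phi_k(Y)\otimes I$, the joint map $\phi$ is \emph{injective}. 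Thus $\.g_{\+A_{Q}}\cong\phi(\.g_{\+A_{Q}})=\spn\{(\lambda_1^{s}X,\dots,\lambda_K^{s}X):X\in\.g^{(s)},\ s\ge 1\}$, and both claims reduce to identifying this subalgebra inside $\bigoplus_k\.g_\+A$.

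The heart of the argument is showing $\bigoplus_{k=1}^{K}[\.g_\+A,\.g_\+A]\subseteq\phi(\.g_{\+A_{Q}})$, i.e.\ that each summand can be isolated. Fix $X\in[\.g_\+A,\.g_\+A]$. Here I would invoke the defining property of a cyclic generating set (made precise in Section~\ref{sec:gensetcardincrease}), which forces the set of lengths at which a given element is realized to be periodic, so that $X$ occurs (up to a nonzero scalar) as a length-$s_m$ combination for $K$ distinct values $s_1,\dots,s_K\ge 2$ forming an arithmetic progression of some period $p$. This yields the vectors $(\lambda_k^{s_m}X)_k\in\phi(\.g_{\+A_{Q}})$, whose coefficient matrix factors as a diagonal matrix times a Vandermonde matrix in $\mu_k:=\lambda_k^{p}$. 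Sign unambiguity guarantees the $\mu_k$ are pairwise distinct --- it rules out exactly the obstruction $\lambda_k=-\lambda_l$ that could otherwise survive an even period --- so the matrix is invertible and the $K$ vectors span $\bigoplus_k\mb{R}X$, giving each copy $(0,\dots,X,\dots,0)$. The reverse inclusion being immediate, this proves $[\.g_{\+A_{Q}},\.g_{\+A_{Q}}]\cong\bigoplus_{k=1}^{K}[\.g_\+A,\.g_\+A]$. I expect this combination of cyclicity (enough realization lengths) with sign unambiguity (invertibility of the generalized Vandermonde system) to be the \textbf{main obstacle}.

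For the center I would project the length-$1$ part: writing $A_i=A_i^{ss}+A_i^{z}$ along $\.g_\+A=[\.g_\+A,\.g_\+A]\oplus Z(\.g_\+A)$ (Fact~\ref{directsum}) and noting $Z(\.g_\+A)=\spn\{A_i^{z}\}$, the vector $(\lambda_k A_i)_k$ splits as $(\lambda_k A_i^{ss})_k\in\bigoplus_k[\.g_\+A,\.g_\+A]$ plus $(\lambda_k A_i^{z})_k$. Hence $\phi(\.g_{\+A_{Q}})=\bigoplus_k[\.g_\+A,\.g_\+A]\oplus\{(\lambda_k C)_k:C\in Z(\.g_\+A)\}$, and since $\phi(C\otimes Q)=(\lambda_k C)_k$ this complementary summand is $\{C\otimes Q:C\in Z(\.g_\+A)\}$. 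A direct check gives $[C\otimes Q,\,X\otimes Q^{s}]=[C,X]\otimes Q^{s+1}=0$ for $C\in Z(\.g_\+A)$, so this subspace lies in $Z(\.g_{\+A_{Q}})$; comparing with the reductive decomposition of $\.g_{\+A_{Q}}$ (Fact~\ref{directsum}) and matching dimensions forces it to be \emph{all} of $Z(\.g_{\+A_{Q}})$, completing the proof.
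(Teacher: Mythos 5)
Your proposal is correct and follows essentially the same route as the paper: the length-to-$Q^{s}$ bookkeeping, the use of cyclicity (stable extensions inserted after the start of a realizing sequence) to obtain each basis element of $[\.g_\+A,\.g_\+A]$ at $K$ lengths in arithmetic progression, and the generalized Vandermonde matrix whose invertibility is exactly what sign unambiguity plus distinctness of eigenvalues guarantees, are all the paper's argument, with your block maps $\phi_k$ playing the role of the eigenprojectors $\Pi_k$ and Lemma~\ref{lem:copies_from_projectors}. The only real deviation is the center, where you count dimensions against the reductive decomposition of Fact~\ref{directsum}, whereas the paper writes an arbitrary central element as $\sum_j C_j\otimes Q^j$ and shows $C_j=0$ for $j\neq 1$; both are valid.
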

In this case, the modified generating set $\+A_Q$ has cardinality $\abs{\+A_Q}=\abs{\+A}$, and only an additional $\lceil \log K \rceil$ qubits are required to generate the $K$-fold direct sum of $[\.g_\+A,\.g_\+A]$.








 \section{Preliminaries}
In this section we introduce  some terminology, and give a number of preliminary results needed in the sequel. All operators we consider are assumed to act on finite dimensional Hilbert spaces.

For a positive integer $n$ we denote the set $\{1, \ldots, n\}$ by $[n]$. 
For two vector spaces $V_1$ and $V_2$ with bases $B_1$ and $B_2$, respectively, their tensor product is $V_1\otimes V_2 \defeq \spn\{v_1\otimes v_2: v_1\in B_1, v_2\in B_2\}$. A linear map $\phi : \.g \rightarrow \.h$ from a Lie algebra $\.g$ to a
Lie algebra $\.h$ is a {\it Lie algebra homomorphism} if
\eq{
[ \phi(U), \phi(U')] = \phi([U,U']),
}
for all $U,U' \in \.g.$
A {\it Lie algebra isomorphism} is a Lie algebra homomorphism which is also
a vector space isomorphism.
For isomorphic Lie algebras $\.g$ and $\.h$ we will use the notation
$\.g \cong \.h.$  We say that the linear operators $A$ and $B$ are {\it proportional} if there exists a real number $c \neq 0$ such that $A = cB.$ For proportional operators we use the notation $A \propto B$.
 For two sets of linear operators $\+S$ and $\+T$, we define their tensor product as $\+S \otimes \+T = \{S \otimes T : S \in \+S, T \in \+T\}$.


\begin{lem}\label{decompositions}
    For any dynamical generating set $\+A$, every $U \in \.g_\+A$ 
    has a decomposition
    \eq{
     U = A + W,
    }
    where $A \in \spn( \+A)$ and $W \in [\.g_\+A,\.g_\+A].$
\end{lem}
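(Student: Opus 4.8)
The plan is to exploit the explicit description of $\.g_\+A$ as the linear span of all nested commutators of the generators. First I would recall that, by construction, $\.g_\+A$ is the smallest Lie algebra containing $\+A$, and that such a generated algebra is precisely the real span of $\+A$ together with all iterated commutators $[A_{i_1},[A_{i_2},[\ldots,A_{i_s}]\ldots]]$ of every order $s\ge 2$. By Fact~\ref{fact:nested} it suffices to restrict to right-nested commutators, though for this lemma that reduction is not essential. The consequence I need is that any $U \in \.g_\+A$ can be written as a finite real linear combination of nested commutators of the generators, which I then group by their order.

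Next I would separate the order-one terms from the higher-order terms. Writing $U = \sum_k c_k A_{i_k} + \sum_m d_m N_m$, where each $N_m$ is a nested commutator of order at least $2$, the first sum is by definition an element $A \in \spn(\+A)$.

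The key observation for the second sum is that every nested commutator $N_m$ of order $s \ge 2$ has the form $[A_{i_1}, N']$, where $A_{i_1} \in \+A \subseteq \.g_\+A$ and $N'$ is a nested commutator of order $s-1$, hence also an element of $\.g_\+A$. Thus $N_m = [A_{i_1}, N'] \in [\.g_\+A, \.g_\+A]$. Since $[\.g_\+A, \.g_\+A]$ is by definition a linear subspace, the linear combination $W \defeq \sum_m d_m N_m$ lies in $[\.g_\+A, \.g_\+A]$. Setting $A \defeq \sum_k c_k A_{i_k}$ then gives $U = A + W$ with $A \in \spn(\+A)$ and $W \in [\.g_\+A, \.g_\+A]$, as required.

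I do not expect a genuine obstacle here, as the statement is structural. The one point requiring a little care is justifying that $\.g_\+A$ really is spanned by nested commutators, equivalently that this span is closed under the bracket and therefore coincides with the generated algebra; this is where the Jacobi identity and Fact~\ref{fact:nested} enter, but it is standard and already invoked in the excerpt.
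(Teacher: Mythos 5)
Your proof is correct and takes essentially the same route as the paper, whose entire proof is the one-line observation that $\.g_\+A \subseteq \spn(\+A \cup [\.g_\+A,\.g_\+A])$; your argument simply spells out why that inclusion holds (every nested commutator of order at least $2$ lies in $[\.g_\+A,\.g_\+A]$, and the order-one terms span $\spn(\+A)$).
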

\begin{proof}
Follows directly from the fact that $\.g_\+A \subseteq \spn(\+A \cup [\.g_\+A,\.g_\+A])$.
\end{proof}

Given a dynamical generating set $\+A$, we define an $\+A$-{\it sequence}
to be a sequence 
\eq{
\underline{A} =(A_{k_{1}},  A_{k_{2}}, \ldots , A_{k_{\ell-1}}, A_{k_{\ell}} )
}
of generators where $\ell \geq 0$ and $k_1, \ldots, k_{\ell} \in [L]$.
We call $\ell$ the {\it length} of ${\underline{A}},$  and we say that
${\underline{A}}$ is {\it proper} if $\ell \geq 2.$
For a proper $\+A$-{\it sequence} $\underline{A}$,
the {\it start} and the {\it end} of 
$\underline{A}$ are respectively
defined as the $\+A$-sequences $\start(\underline{A}) = (A_{k_{1}},  A_{k_{2}})$
and $\fin(\underline{A}) = (A_{k_{3}}, \ldots , A_{k_{\ell-1}}, A_{k_{\ell}}),$
and the {\it value} 
of ${\underline{A}}$ is the right-nested commutator
 \eq{ \val({\underline{A}}) =
[A_{k_{\ell}}, [ A_{k_{\ell - 1}}, \ldots , [A_{k_{2}}, A_{k_{1}}]\ldots]].
}
If $\underline{A'} = (A_{k'_{1}}, \ldots ,  A_{k'_{\ell'}})$ is also an
$\+A$-{\it sequence} 
then the {\it concatenation} of $\underline{A}$ and
$\underline{A'}$ is defined as
\eq{
\underline{A} \circ \underline{A'} = (A_{k_{1}}, \ldots ,  A_{k_{\ell}},
A_{k'_{1}}, \ldots ,  A_{k'_{\ell'}}).
}
We say that $\underline{A} \circ \underline{A'}$ is an \textit{extension} of $\underline{A}$
if $\ell' \geq 1,$ 
and it is a {\it stable} extension if $ \val({\underline{A}} \circ \underline{A'}) \propto  \val({\underline{A}}).$
 We say that a  basis $\+V = \{V_1, \ldots, V_D\}$ 
     for $ [\.g_\+A,\.g_\+A]$ is a {\it right-nested $\+A$-commutator basis} if every $V\in\+V$ is the value of a proper $\+A$-sequence. 
Observe that the concatenation of $\+A$-sequences is an associative operation.

From Fact~\ref{fact:nested}, the following is immediate:
\begin{prop}  
\label{prop:generating-basis}
    Every DLA has a right-nested $\+A$-commutator basis for $[\.g_\+A,\.g_\+A]$. 
\end{prop}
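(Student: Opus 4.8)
The plan is to derive Proposition~\ref{prop:generating-basis} directly from Fact~\ref{fact:nested} together with the standard description of how a Lie algebra is generated by a set. First I would recall the concrete way $[\.g_\+A,\.g_\+A]$ is spanned: since $\.g_\+A$ is by definition the smallest Lie algebra containing $\+A$, it is spanned by $\+A$ together with all (iterated) nested commutators of elements of $\+A$. By Lemma~\ref{decompositions}, every element of $\.g_\+A$ splits as $A+W$ with $A\in\spn(\+A)$ and $W\in[\.g_\+A,\.g_\+A]$, and the commutator subalgebra $[\.g_\+A,\.g_\+A]$ is spanned by the nested commutators of length $\ge 2$. Thus a spanning set for $[\.g_\+A,\.g_\+A]$ is given by all nested commutators of at least two generators.

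Next I would invoke Fact~\ref{fact:nested}, which says every nested commutator of elements of $\+A$ is a linear combination of right-nested commutators. A right-nested commutator of the form $[A_{i_1},[\ldots[A_{i_{s-1}},A_{i_s}]\ldots]]$ is, up to reversing the index order in the notation, exactly the value $\val(\underline A)$ of a proper $\+A$-sequence $\underline A=(A_{i_s},\ldots,A_{i_1})$ (one should check the indexing convention in the definition of $\val$ matches, which is a cosmetic matter of reading the innermost bracket as $\start(\underline A)$). Consequently the set of all values of proper $\+A$-sequences already spans $[\.g_\+A,\.g_\+A]$.

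Having a spanning set consisting of values of proper $\+A$-sequences, the final step is to extract from it a basis. Since $[\.g_\+A,\.g_\+A]$ is a subspace of the finite-dimensional $\.{su}(2^n)$, any spanning set admits a finite linearly independent subset $\+V=\{V_1,\ldots,V_D\}$ that is still a basis; each $V_j$ retained in this subset is by construction the value of a proper $\+A$-sequence. By the definition given in the excerpt, $\+V$ is precisely a right-nested $\+A$-commutator basis, which completes the argument.

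This proof is essentially bookkeeping, so I do not anticipate a serious obstacle; the only point requiring mild care is the indexing/bracket-orientation check that the right-nested commutators appearing in Fact~\ref{fact:nested} coincide (up to the order in which generators are listed) with the quantity $\val(\underline A)$ defined for proper $\+A$-sequences, and verifying that the length restriction $s\ge 2$ in Fact~\ref{fact:nested} corresponds to properness ($\ell\ge 2$). Once that correspondence is observed, the statement is immediate.
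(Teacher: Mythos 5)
Your proposal is correct and takes essentially the same route as the paper, which simply declares the proposition ``immediate'' from Fact~\ref{fact:nested}: right-nested commutators of degree $\ge 2$ are exactly the values of proper $\+A$-sequences (up to the reversed indexing convention you note), Fact~\ref{fact:nested} shows they span $[\.g_\+A,\.g_\+A]$, and finite-dimensionality lets you extract a basis from this spanning set. Your write-up just makes explicit the bookkeeping the paper leaves implicit.
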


From hereon, we denote by $\chi$ a non-zero Hermitian operator, and by $K$ the number of distinct eigenvalues of $\chi$.  For each $j\in[K]$, we denote by $\Pi_j$ the orthogonal projector onto the $\lambda_j$ eigenspace of $\chi$. These satisfy $\Pi_i^2 = \Pi_i$ for all $i\in [K]$, and $\Pi_i \Pi_j = 0$ for all $i\neq j$. Finally, let $\+U_\chi = \{\chi^j : 0 \leq j \leq K-1\}$
 and $\+P_\chi = \{\Pi_j : j\in [K]\}$.

\begin{lem}\label{lem:projectors}
 \begin{enumerate}[(i)]
 \item[]
 For every $\chi=\sum_{k=1}^K \lambda_k \Pi_k$ as described above: 
 \item the set $\+U_\chi  = \{\chi^0, \chi^1, \ldots, \chi^{K-1}\}$ is linearly independent;\label{line:projectors-1}
 \item $\spn(\{\Pi_1, \ldots, \Pi_K\}) = \spn (\+U_\chi)$; \label{line:projectors-2}
 \item $\spn\{\chi^j : j \ge 0\}=\spn(\+U_\chi)$. \label{line:projectors-3}
 \end{enumerate}

\end{lem}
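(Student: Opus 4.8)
The plan is to reduce everything to the spectral decomposition $\chi = \sum_{k=1}^K \lambda_k \Pi_k$ together with the elementary identity $\chi^j = \sum_{k=1}^K \lambda_k^j \Pi_k$, valid for all $j \ge 0$ once we adopt $\lambda_k^0 = 1$ so that $\chi^0 = \sum_k \Pi_k = I$; this follows at once from $\Pi_i\Pi_j = \delta_{ij}\Pi_i$. Before addressing the three claims I would record the auxiliary fact that the projectors $\Pi_1, \ldots, \Pi_K$ are linearly independent: applying $\Pi_m$ to a relation $\sum_k a_k\Pi_k = 0$ collapses it to $a_m\Pi_m = 0$, and since each $\Pi_m \neq 0$ this forces $a_m = 0$. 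Consequently $\spn(\{\Pi_1, \ldots, \Pi_K\})$ has dimension exactly $K$, a fact I will reuse twice.

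For part (i) I would start from an arbitrary vanishing combination $\sum_{j=0}^{K-1} c_j\chi^j = 0$, substitute the identity above, and regroup to obtain $\sum_{k=1}^K\big(\sum_{j=0}^{K-1} c_j\lambda_k^j\big)\Pi_k = 0$. Linear independence of the $\Pi_k$ then forces each coefficient $\sum_{j=0}^{K-1} c_j\lambda_k^j$ to vanish, which is precisely the homogeneous system whose coefficient matrix is the Vandermonde matrix $(\lambda_k^j)_{k,j}$. Since the eigenvalues are pairwise distinct this determinant is non-zero, so $c_0 = \cdots = c_{K-1} = 0$ and $\+U_\chi$ is linearly independent. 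This Vandermonde step is the only place where the distinctness hypothesis is used, and it is the crux of the whole lemma; the remainder is bookkeeping. For part (ii), the inclusion $\spn(\+U_\chi)\subseteq\spn(\{\Pi_1, \ldots, \Pi_K\})$ is immediate from $\chi^j = \sum_k \lambda_k^j\Pi_k$, and since both spaces have dimension $K$ — the former by part (i), the latter by the opening remark — the inclusion is an equality. (Should an explicit reverse inclusion be preferred, Lagrange interpolation writes $\Pi_k = \prod_{m\neq k}(\chi - \lambda_m I)/(\lambda_k - \lambda_m)$ as a polynomial of degree $K-1$ in $\chi$.)

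Part (iii) then requires no new work: $\spn(\+U_\chi)\subseteq\spn\{\chi^j : j\ge 0\}$ holds trivially, while conversely every $\chi^j = \sum_k\lambda_k^j\Pi_k$ lies in $\spn(\{\Pi_1,\ldots,\Pi_K\})$, which equals $\spn(\+U_\chi)$ by part (ii). I do not anticipate any genuine obstacle here; the entire argument hinges on converting the distinctness of the $\lambda_k$ into invertibility of the Vandermonde matrix, after which the projector identities close everything out.
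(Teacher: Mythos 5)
Your proposal is correct and follows essentially the same route as the paper's proof: both hinge on the identity $\chi^j = \sum_k \lambda_k^j \Pi_k$, the linear independence of the spectral projectors, and the invertibility of the Vandermonde matrix in the distinct eigenvalues. Your version simply spells out the bookkeeping (and offers Lagrange interpolation as an explicit inverse) where the paper is terser.
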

\begin{proof}
$\chi$ has  spectral decomposition $\chi = \sum_{i=1}^K \lambda_i \Pi_i$, and 
we can express the successive powers $\chi^0, \chi, \ldots, \chi^{K-1}$ as the matrix equation
\eq{
\bpm \chi^0 \\ \chi^1 \\ \vdots \\ \chi^{K-1}\epm &= 
\bpm 
1 & 1 & \cdots & 1 \\
\lambda_1 & \lambda_2 & \cdots & \lambda_K \\
\vdots & \vdots & \ddots & \vdots \\
\lambda_1^{K-1} & \lambda_2^{K-1} & \cdots & \lambda_K^{K-1}
\epm
\bpm
\Pi_1 \\ \Pi_2 \\ \vdots \\\Pi_K
\epm.
}
The set of projectors $\{\Pi_i\}_{i=1}^K$ is linearly independent. As the eigenvalues are distinct, the Vandermonde matrix can be inverted. This proves \eqref{line:projectors-1} and \eqref{line:projectors-2}.  To finish, \eqref{line:projectors-3} then follows immediately from the fact that $\chi^k = \sum_{i=1}^K \lambda_i^k \Pi_i $, and thus  $\chi^k\in \spn(\{\Pi_1, \ldots, \Pi_K\})$, for every $k\ge K$.
\end{proof}


 \begin{lem} \label{lem:copies_from_projectors}
Let $\{U_1, \ldots, U_{\delta} \}$ be a basis for a dynamical Lie algebra $\.g$. 
Then, $\bigcup_{j=1}^K\{U_1\otimes \Pi_j, \ldots, U_{\delta}\otimes\Pi_j \}$ is a basis for a dynamical Lie algebra 
isomorphic to $\bigoplus_{j=1}^K\.g$. A basis for the $j$-th copy of $\.g$ is $\{U_1\otimes \Pi_j, \ldots, U_{\delta}\otimes \Pi_j\}$.  
\end{lem}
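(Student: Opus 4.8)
The plan is to exploit the orthogonality relations $\Pi_i^2 = \Pi_i$ and $\Pi_i \Pi_j = 0$ (for $i \neq j$) to show that the tensor factors carrying distinct projectors decouple completely under the Lie bracket. First I would compute, for any two basis elements $U_a, U_b$ and any $i, j \in [K]$,
\[
[U_a \otimes \Pi_i,\, U_b \otimes \Pi_j] = U_a U_b \otimes \Pi_i \Pi_j - U_b U_a \otimes \Pi_j \Pi_i.
\]
By the projector relations this collapses to $[U_a, U_b] \otimes \Pi_i$ when $i = j$, and to $0$ when $i \neq j$. Since $\{U_1, \ldots, U_\delta\}$ is a basis for $\.g$, the commutator $[U_a, U_b]$ is a real linear combination of the $U_c$, so every such bracket lies in the span of the proposed set. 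This single computation already shows that the span is closed under the bracket, that each block $\.h_j \defeq \spn\{U_1 \otimes \Pi_j, \ldots, U_\delta \otimes \Pi_j\}$ is a subalgebra, and that $[\.h_i, \.h_j] = 0$ whenever $i \neq j$. Moreover, each $U_a \otimes \Pi_j$ is traceless and anti-Hermitian (as $U_a$ is traceless anti-Hermitian and $\Pi_j$ is Hermitian), so the closed span is genuinely a DLA.

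Next I would confirm linear independence of the full family $\{U_a \otimes \Pi_j : a \in [\delta],\, j \in [K]\}$, so that it is a basis rather than merely a spanning set. Here $\{U_a\}$ is linearly independent (being a basis of $\.g$) and $\{\Pi_j\}$ is linearly independent by Lemma~\ref{lem:projectors}; since the two families occupy separate tensor factors, the standard fact that tensor products of linearly independent families are themselves linearly independent applies directly. In particular $\{U_a \otimes \Pi_j\}$ is a basis for each block with the stated indexing.

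Finally, I would exhibit the isomorphism explicitly. Define $\phi : \bigoplus_{j=1}^K \.g \to \spn\{U_a \otimes \Pi_j\}$ by $\phi(X_1, \ldots, X_K) = \sum_{j=1}^K X_j \otimes \Pi_j$. By the previous step $\phi$ sends the canonical basis of $\bigoplus_j \.g$ bijectively onto the basis $\{U_a \otimes \Pi_j\}$, so it is a linear isomorphism of vector spaces. The homomorphism property follows from the bracket computation above: expanding $[\phi(X), \phi(Y)]$, all cross terms with distinct projectors vanish and the surviving diagonal terms reproduce the componentwise bracket $([X_1, Y_1], \ldots, [X_K, Y_K])$, which is exactly $\phi([X, Y])$.

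I anticipate no serious obstacle, as the entire argument is driven by the orthogonality of the projectors. The only point needing mild care is the linear-independence step, which must use that the $U_a$ and the $\Pi_j$ live in distinct tensor factors, rather than conflating the two families.
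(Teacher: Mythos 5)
Your proposal is correct and rests on exactly the same key computation as the paper's one-line proof, namely that $[U_p\otimes \Pi_j, U_q\otimes \Pi_j] = [U_p,U_q]\otimes \Pi_j$ and $[U_p\otimes\Pi_i, U_q \otimes \Pi_j] = 0$ for $i\neq j$. The additional details you supply (linear independence via the tensor-factor argument and the explicit map $\phi$) are the routine verifications the paper leaves implicit.
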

\begin{proof} Follows directly from the fact that, for all $p,q\in[\delta]$, $[U_p\otimes \Pi_j, U_q\otimes \Pi_j] = [U_p,U_q]\otimes \Pi_j$, and $[U_p\otimes\Pi_i, U_q \otimes \Pi_j] = 0$ for $i\neq j$.
\end{proof}

 Let $\+A = \{A_1, \ldots, A_L\}$ be a 
dynamical generating set. Recall that for any $S \subseteq [L],$ we denote by $\+A_S$ the set of operators
$\{A_i \otimes I : i \in [L] \} \cup \{A_i \otimes \chi : i \in S\}$
and, 
we write $\+A_i$ for $\+A_{\{i\}}$. 
We define $\+A_\chi$ as $\{A_i \otimes \chi : i \in [L]\}.$
\begin{lem}\label{lem:center-from-original} 
    For all $S \subseteq [L],$  and for $\.g_{\+A_{\chi}} $ we have 
    \begin{enumerate}[(i)]
    \item 
    $\.g_{\+A_{S}}, \.g_{\+A_{\chi}} \subseteq \.g_{\+A} \otimes \spn(\+U_\chi),$ \label{line:center-from-original-1} 
    \item $[\.g_{\+A_{S}}, \.g_{\+A_{S}}], [\.g_{\+A_{\chi}}, \.g_{\+A_{\chi}}]  \subseteq [\.g_{\+A}, \.g_{\+A}] \otimes \spn(\+U_\chi),$ \label{line:center-from-original-2} 
    
    \item $Z(\.g_{\+A_{S}}), Z(\.g_{\+A_{\chi}}) \subseteq Z(\.g_\+A) \otimes \spn(\+U_\chi).$ \label{line:center-from-original-3} 
\end{enumerate}
\end{lem}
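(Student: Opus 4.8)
The plan is to prove the three containments in order, deducing each from the previous one, since $(ii)$ and $(iii)$ should follow almost immediately once $(i)$ is established. For $(i)$, the key observation is that all the generators of $\.g_{\+A_S}$ and of $\.g_{\+A_\chi}$ lie in $\.g_\+A \otimes \spn(\+U_\chi)$: each generator has the form $A_i\otimes I$ or $A_i\otimes\chi$, and $I=\chi^0$ and $\chi=\chi^1$ both lie in $\spn(\+U_\chi)$ by definition. I would then argue that $\.g_\+A \otimes \spn(\+U_\chi)$ is itself closed under the Lie bracket, so that the DLA generated by these operators — being the smallest Lie algebra containing them — must be contained in it. The closure computation is the heart of $(i)$: for typical elements $U\otimes f$ and $U'\otimes g$ with $U,U'\in\.g_\+A$ and $f,g\in\spn(\+U_\chi)$, the commutator is $[U\otimes f, U'\otimes g] = [U,U']\otimes fg$, using that $f,g$ are polynomials in the single operator $\chi$ and hence commute. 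Now $[U,U']\in\.g_\+A$ since $\.g_\+A$ is a Lie algebra, and $fg\in\spn(\+U_\chi)$ by Lemma~\ref{lem:projectors}\eqref{line:projectors-3}, which guarantees that the span of $\{\chi^j:j\ge 0\}$ equals $\spn(\+U_\chi)$ — this is exactly where the bound on powers of $\chi$ matters, since $fg$ could a priori be a polynomial of degree up to $2K-2$.

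For $(ii)$, I would take the commutator subalgebra of $(i)$. Since $\.g_{\+A_S}\subseteq \.g_\+A\otimes\spn(\+U_\chi)$, any element of $[\.g_{\+A_S},\.g_{\+A_S}]$ is a linear combination of commutators of elements of $\.g_\+A\otimes\spn(\+U_\chi)$; by the same bracket computation $[U\otimes f,U'\otimes g]=[U,U']\otimes fg$, such a commutator lies in $[\.g_\+A,\.g_\+A]\otimes\spn(\+U_\chi)$, and the latter is a subspace so it contains all linear combinations. The argument for $\.g_{\+A_\chi}$ is identical.

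For $(iii)$, the plan is to take an arbitrary $U\in Z(\.g_{\+A_S})$, write $U=\sum_{k} C_k\otimes \chi^{k}$ using $(i)$ and the basis $\+U_\chi$ (linearly independent by Lemma~\ref{lem:projectors}\eqref{line:projectors-1}), and show each $C_k$ lies in $Z(\.g_\+A)$. The condition $U\in Z(\.g_{\+A_S})$ means $[U,\,A_i\otimes I]=0$ for every generator $A_i\otimes I$; expanding gives $\sum_k [C_k,A_i]\otimes\chi^k=0$, and the linear independence of $\{\chi^k\}$ forces $[C_k,A_i]=0$ for all $i$ and all $k$. Since the $A_i$ generate $\.g_\+A$, this means each $C_k$ commutes with all of $\.g_\+A$, i.e.\ $C_k\in Z(\.g_\+A)$, so $U\in Z(\.g_\+A)\otimes\spn(\+U_\chi)$. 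The main obstacle I anticipate is purely bookkeeping: ensuring that commuting with the generators $A_i$ (rather than with all of $\.g_\+A$) genuinely suffices to place $C_k$ in the center, which follows because the centralizer condition is linear and the $A_i$ span a generating set; and handling the $\.g_{\+A_\chi}$ case, where the generators are $A_i\otimes\chi$ rather than $A_i\otimes I$, so one must use that $\chi$ is nonzero to cancel the common $\chi$ factor after expanding the bracket.
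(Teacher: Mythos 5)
Your proposal follows essentially the same route as the paper's proof: (i) holds because the generators lie in $\.g_\+A\otimes\spn(\+U_\chi)$ and this space is closed under the bracket via $[U\otimes T,U'\otimes T']=[U,U']\otimes TT'$ together with Lemma~\ref{lem:projectors}~\eqref{line:projectors-3}; (ii) is the same bracket computation; and (iii) is proved by expanding a central element as $\sum_j C_j\otimes\chi^j$ and using the linear independence of $\+U_\chi$. The only difference for $\.g_{\+A_S}$ is cosmetic: the paper commutes $C$ against $U\otimes\chi^0$ for arbitrary $U\in\.g_\+A$ (legitimate since $\.g_\+A\otimes\{\chi^0\}\subseteq\.g_{\+A_S}$), whereas you commute against the generators $A_i\otimes I$ and then pass to the full centralizer, which does work by the Jacobi identity.

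One caveat concerns your handling of the $\.g_{\+A_\chi}$ case of (iii). There the available test elements are $A_i\otimes\chi$, so expanding gives $\sum_{j}[C_j,A_i]\otimes\chi^{j+1}=0$, and concluding $[C_j,A_i]=0$ termwise requires $\{\chi^1,\ldots,\chi^{K}\}$ to be linearly independent. That holds precisely when $0$ is not an eigenvalue of $\chi$ (the relevant coefficient matrix is the Vandermonde matrix right-multiplied by $\mathrm{diag}(\lambda_1,\ldots,\lambda_K)$), not merely when $\chi\neq 0$ as you suggest; ``cancelling the common $\chi$ factor'' needs $\chi$ invertible. If $0\in\mathrm{spec}(\chi)$ the termwise conclusion fails and one would need an extra argument exploiting that $C$ lies in $\.g_{\+A_\chi}$ itself, which excludes a $\chi^0$ component. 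To be fair, the paper glosses over the same point by declaring the $\.g_{\+A_\chi}$ proofs ``identical,'' and in its only application of this part (Theorem~\ref{thm:comcyc}) the operator $Q$ has exclusively non-zero eigenvalues, so the issue is harmless there --- but your stated fix is not sufficient as written.
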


 \begin{proof}
 We prove the statements for $\.g_{\+A_{S}}$; the proofs for $\.g_{\+A_{\chi}}$ are identical.
 The proof of \eqref{line:center-from-original-1} is by definition. 
 For \eqref{line:center-from-original-2}, observe that if $U_1, U_2 \in \.g_{\+A}$ and
 $T_1, T_2 \in \spn(\+U_\chi),$ then 
 $[U_1 \otimes T_1, U_2 \otimes T_2] = [U_1, U_2] \otimes T_1T_2 \in [\.g_{\+A}, \.g_{\+A}] \otimes \spn(\+U_\chi)$ by Lemma \ref{lem:projectors} \eqref{line:projectors-3}.
 For \eqref{line:center-from-original-3}, 
let $C \in Z(\.g_{\+A_{S}}).$ By \eqref{line:center-from-original-1}
it can be decomposed as $C = \sum_{j=0}^{K-1} C_j \otimes \chi^j,$
where $C_j \in \.g_{\+A}$, for $0 \leq j \leq K-1.$ Since 
$C \in Z(\.g_{\+A_{S}}),$ for every element $U \in \.g_{\+A},$ we have
\eq{
0 = [C,U \otimes \chi^0] = \sum_{j=0}^{K-1} [C_j,U] \otimes \chi^j.
}
By Lemma \ref{lem:projectors} \eqref{line:projectors-1}, these $\chi^j$ are linearly independent. This implies that $[C_j,U]=0$ for all $U \in \.g_{\+A}$, and therefore $C_j \in Z(\.g_{\+A})$,
for $0 \leq j \leq K-1.$
\end{proof}

\suppress{ 
\begin{proof} Let $\+V=\{V_1,\ldots, V_D\}$ be a largest cardinality set of linearly independent nested commutators of $\+A$, and let $\tilde{\+A}\subseteq \+A$ be a largest cardinality subset of $\+A$ such that $\tilde{\+A}\cup\+V$ is linearly independent. Then $\tilde{\+A}\cup\+V$ is a basis for $\.g_\+A$, and $[A, V]\in \spn(\+V)$ for all $A\in\+A$, $V\in \+V$.  Furthermore, by the maximality of $\+V$, $[A,A']\in\spn(\+V)$ for all $A,A'\in\tilde{\+A}$. Clearly $\spn(\+V)\subseteq [\.g_\+A,\.g_\+A]$. We will show that $[\.g_\+A,\.g_\+A]\subseteq \spn(\+V)$ and thus $\+V$ is a basis for $[\.g_\+A,\.g_\+A]$.

Consider $W=[A_{k_1},[A_{k_2},\ldots, [A_{k_{\ell-1}},A_{k_\ell}]\ldots]]$ for $\ell \ge 2$ and $A_{k_1},\ldots, A_{k_\ell}\in \tilde{\+A}$. We prove by induction on $\ell$ that $[W, V]\in \spn(\+V)$ for all $V\in\+V$, and thus $[\+V,\+V]\subseteq\+V$. For the base case $\ell=2$, by the Jacobi identity we have $[[A_{k_1},A_{k_2}],V]=[A_{k_1},[A_{k_2},V]] + [A_{k_2},[V,A_{k_1}]]\in\spn(\+V)$. Now consider $W = [A_{k_1},[A_{k_2}, \ldots, [A_{k_{\ell}}, A_{k_{\ell+1}}]]=[A_{k_1},W']$ for some $W'$ of length $\ell$. By the inductive hypothesis and the Jacobi identity we have $[W,V]=[[A_{k_1}, W'],V] = [A_{k_1},[W',V]]+[W',[V,A_{k_1}]] = [A_{k_1}, V'] + [W',V'']\in\spn(\+V)$ for some $V',V''\in\+V$. Thus,
\eq{
[\.g_\+A,\.g_\+A] &= \{[\alpha A + \beta V, \alpha' A' + \beta' V' ] : A,A'\in\tilde{\+A}, V,V'\in\+V, \alpha, \alpha', \beta,\beta'\in\mb{R}\} \\
&\subseteq\spn(\+V).
}
\end{proof}
}


\begin{lem}\label{lem:two_bases}
    Suppose $\{A_1, \ldots, A_L\}$ and $\{B_1, \ldots, B_L\}$ are two bases of the same subspace, then \[\langle \set{ A_1, \ldots, A_L} \rangle_{{{\rm Lie}, \R}} = \langle \set{B_1, \ldots, B_L} \rangle_{{{\rm Lie}, \R}}.\]
\end{lem}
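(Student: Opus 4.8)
The plan is to rely only on the defining property that $\langle \+S \rangle_{{\rm Lie},\R}$ is the \emph{smallest} real Lie algebra containing the set $\+S$, together with the trivial observation that any real Lie algebra is in particular a real vector space. Write $V$ for the common subspace, so that $V = \spn\{A_1,\ldots,A_L\} = \spn\{B_1,\ldots,B_L\}$. The whole argument is a two-sided containment, and the key point is that the generated Lie algebra already contains the full linear span of its generators.

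First I would observe that $\langle \{A_1,\ldots,A_L\}\rangle_{{\rm Lie},\R}$ is a real vector space containing each $A_i$, hence it contains $\spn\{A_1,\ldots,A_L\} = V$. Since every $B_j$ lies in $V$, each $B_j$ therefore belongs to $\langle \{A_1,\ldots,A_L\}\rangle_{{\rm Lie},\R}$. Thus $\langle \{A_1,\ldots,A_L\}\rangle_{{\rm Lie},\R}$ is a real Lie algebra containing the set $\{B_1,\ldots,B_L\}$. Because $\langle \{B_1,\ldots,B_L\}\rangle_{{\rm Lie},\R}$ is by definition the smallest Lie algebra containing $\{B_1,\ldots,B_L\}$, this yields the inclusion $\langle \{B_1,\ldots,B_L\}\rangle_{{\rm Lie},\R} \subseteq \langle \{A_1,\ldots,A_L\}\rangle_{{\rm Lie},\R}$.

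The roles of the two bases are completely symmetric: the argument used only that $\{A_i\}$ and $\{B_j\}$ span the same subspace $V$. Swapping the roles of $A$ and $B$ therefore gives the reverse inclusion, and combining the two inclusions yields the desired equality.

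I do not expect any genuine obstacle here, since the statement is essentially a restatement of the fact that $\langle \cdot \rangle_{{\rm Lie},\R}$ depends only on the span of its argument. The only points requiring care are to invoke the minimality of the generated Lie algebra in the correct direction, and to make explicit that closure under real linear combinations forces $\langle \{A_i\}\rangle_{{\rm Lie},\R}$ to contain all of $V$ (and hence each $B_j$) before applying minimality. No results beyond the definition of the generated Lie algebra are needed.
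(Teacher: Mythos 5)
Your proof is correct and follows essentially the same route as the paper's: both rest on the observation that each $B_j$, being a linear combination of the $A_k$'s, already lies in $\langle \{A_1,\ldots,A_L\}\rangle_{{\rm Lie},\R}$, so minimality gives one inclusion and symmetry gives the other. Your write-up is simply a more careful phrasing of the paper's one-line argument.
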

\begin{proof}
    Each $B_j$ can be expressed as a linear combination of $\{A_k:k\in [L]\}$. Then anything generated by $B_j$ can be generated by $\{A_k:k\in [L]\}$ via the appropriate linear combination. Same for the other direction.
\end{proof}

\begin{lem}\label{lem:seed-center-gg}
    Any reductive Lie algebra $\.g = \langle \set{A_1, \ldots, A_L} \rangle_{{{\rm Lie}, \R}}$ generated by linearly independent generators $A_1, \ldots, A_L$ has
    \eq{
        \dim(\.c)+\dim(V_A \cap [\.g,\.g]) &=L,
    }
    where $V_A = \spn(\{A_1,\ldots ,A_L\})$ and  $\.c$ is the center of $\.g$.
\end{lem}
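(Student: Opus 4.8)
The plan is to reduce the claimed identity to two structural facts about $\.g$ and then combine them through the standard dimension formula for a sum of two subspaces. The two ingredients are: (a) the reductive decomposition $\.g = [\.g,\.g] \oplus \.c$, which holds by hypothesis (equivalently, by Fact~\ref{directsum}) and gives $\dim(\.g) = \dim([\.g,\.g]) + \dim(\.c)$; and (b) the (not necessarily direct) decomposition $\.g = V_A + [\.g,\.g]$, where $V_A = \spn(\{A_1,\ldots,A_L\})$.

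For (b), I would argue as follows. Since each $A_i \in \.g$ we have $V_A \subseteq \.g$, and trivially $[\.g,\.g] \subseteq \.g$, so $V_A + [\.g,\.g] \subseteq \.g$. For the reverse inclusion I invoke Lemma~\ref{decompositions}, whose proof only uses the general inclusion $\.g \subseteq \spn(\+A \cup [\.g,\.g])$ valid for any generating set: it states precisely that every $U \in \.g$ can be written $U = A + W$ with $A \in V_A$ and $W \in [\.g,\.g]$, hence $\.g \subseteq V_A + [\.g,\.g]$. Therefore $\.g = V_A + [\.g,\.g]$.

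With these in hand, apply the dimension formula for the sum of the two subspaces $V_A$ and $[\.g,\.g]$:
\eq{
\dim(\.g) = \dim(V_A) + \dim([\.g,\.g]) - \dim(V_A \cap [\.g,\.g]).
}
Because the generators $A_1,\ldots,A_L$ are linearly independent, $\dim(V_A) = L$. Substituting the reductive identity $\dim(\.g) = \dim([\.g,\.g]) + \dim(\.c)$ on the left-hand side, the two $\dim([\.g,\.g])$ terms cancel, leaving $\dim(\.c) = L - \dim(V_A \cap [\.g,\.g])$, which rearranges to the stated equality.

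I expect no genuine obstacle here; the only point requiring care is establishing the decomposition $\.g = V_A + [\.g,\.g]$, i.e.\ that the length-one generators together with the commutator subalgebra already span all of $\.g$. This is exactly the content of Lemma~\ref{decompositions}, so once that is invoked the remainder is a one-line dimension count.
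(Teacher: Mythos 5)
Your proof is correct. It rests on the same two pillars as the paper's argument --- the reductive splitting $\.g = [\.g,\.g]\oplus\.c$ and the fact that $\.g = V_A + [\.g,\.g]$, which you correctly source from Lemma~\ref{decompositions} --- but you finish the count with the standard formula $\dim(V_A + [\.g,\.g]) = \dim(V_A) + \dim([\.g,\.g]) - \dim(V_A\cap[\.g,\.g])$, whereas the paper re-derives that count by hand: it splits $V_A$ into $V_1 = V_A\cap[\.g,\.g]$ and a complement $V_2$, chooses an adapted basis, extends it by vectors of $[\.g,\.g]$ to a basis of all of $\.g$ (invoking Lemma~\ref{lem:two_bases} along the way to keep the generated algebra unchanged), and then argues that the tail of that basis spans exactly $[\.g,\.g]$. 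Your route is shorter, avoids the basis bookkeeping, and makes no use of Lemma~\ref{lem:two_bases}; the paper's construction has the minor side benefit of exhibiting explicit bases adapted to the decomposition. Do note that the hypothesis that the $A_i$ are linearly independent enters only through $\dim(V_A)=L$, and you use it exactly there, so nothing is missing.
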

\begin{proof}
    Define subspace $V_1\defeq V_A \cap [\.g,\.g]$, and take any complementary subspace $V_2$ of $V_1$ in $V_A$, 
    i.e. $V_A=V_1+V_2$, $\dim(V_A)=\dim(V_1)+\dim(V_2)$ and $V_2 \cap [\.g,\.g] =\{0\}$. Take a basis $\{B_1,\ldots, B_c\}$ of $V_2$ and a basis $\{B_{c+1},\ldots,B_L\}$ of $V_1$. Note that $B_{c+1},\ldots, B_L \in V_1 \subseteq [\.g,\.g]$ by definition. Since $\{B_1,\ldots,B_L\}$ and $\{A_1,\ldots,A_L\}$ are two bases of the same space, by Lemma~\ref{lem:two_bases} they generate the same Lie algebra $\.g$. 
    Let $\{B_{L+1},..., B_D\}$ be a basis of vectors in $[\.g,\.g]$ for a complementary subspace of $V_A$ in $\.g$ (since $\.g$ is spanned by $\+A$ and all nested commutators of elements in $\+A$ it is always possible to choose such a basis).
    Now, since $\.g= \spn (\{B_1,\ldots,B_c,B_{c+1},\ldots,B_D \})= V_2+V_3$ where $V_3 = \spn(\{B_{c+1},\ldots,B_D \})$, $V_2\cap [\.g,\.g] = \{0\}$ and $V_3\subseteq[\.g,\.g]$, we have that $V_3=[\.g,\.g]$. (Indeed take any $v\in [\.g,\.g]$. Since ${\.g} = V_2 + V_3$, we can write $v$ as $v = v_2 + v_3$ for some $v_2\in V_2$ and $v_3\in V_3$. Then $v_2 = v - v_3\in [\.g,\.g]$. But $v_2\in V_2$ which intersects with $[\.g,\.g]$ only at 0, thus $v_2 = 0$ and therefore $v = v_3\in V_3$.) Thus $\dim([\.g,\.g])=\dim(V_3)=D-c$, and $\dim(\.c)+\dim(V_1)=D-\dim([\.g,\.g])+\dim(V_1)=c+\dim(V_1)=L.$
\end{proof}

\section{Cardinality-increasing generator set modifications}

We now prove Theorems~\ref{thm:extendgensetnaive}-\ref{thm:twogenerators}, which relate to DLA generator set modifications increasing the total number of generators.

\ExtendGenSetNaive*

 \begin{proof} Follows directly from Lemma~\ref{lem:projectors} \eqref{line:projectors-2}, \eqref{line:projectors-3} and Lemma~\ref{lem:copies_from_projectors}.
 \end{proof}

Recall that, for a dynamical generating set $\+A = \{A_1, \ldots, A_L\}$ and $S \subseteq [L],$ we denote by $\+A_S$ the set of operators 
$\{A_i \otimes I : i \in [L] \} \cup \{A_i \otimes \chi : i \in S\}$, where $\chi$ is a Hermitian operator.
For $1 \le i \leq L$,  we write $\+A_i$ for $\+A_{\{i\}}$. Since $A_i$ is traceless, so is $A_i\otimes \chi$, for all $i\in S$.

\begin{lem}\label{thm:all}

  For every dynamical generating set $\+A = \{A_1, \ldots, A_L\}$, we have 
 \eql{
  [\.g_{\+A_{[L]}}, \.g_{\+A_{[L]}}] = \spn( 
  [\.g_\+A,\.g_\+A]\otimes \+U_\chi)  
  \label{eq:A_L}
  }
  and
  \eq{
   Z(\.g_{\+A_{[L]}}) = 
   \spn(Z(\.g_\+A)\otimes \{\chi^0, \chi^1 \}).
  }
 \end{lem}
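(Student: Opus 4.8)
The plan is to prove each identity by a pair of inclusions, doing the semisimple component first since the center computation will lean on it. For the first identity, the inclusion $[\.g_{\+A_{[L]}},\.g_{\+A_{[L]}}]\subseteq\spn([\.g_\+A,\.g_\+A]\otimes\+U_\chi)$ is immediate from Lemma~\ref{lem:center-from-original}\eqref{line:center-from-original-2} with $S=[L]$, using $[\.g_\+A,\.g_\+A]\otimes\spn(\+U_\chi)=\spn([\.g_\+A,\.g_\+A]\otimes\+U_\chi)$. The content is in the reverse inclusion, for which it suffices to show $V\otimes\chi^j\in[\.g_{\+A_{[L]}},\.g_{\+A_{[L]}}]$ for every $V$ in a right-nested $\+A$-commutator basis of $[\.g_\+A,\.g_\+A]$ (Proposition~\ref{prop:generating-basis}) and every $0\le j\le K-1$. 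For $j\in\{0,1\}$ I would argue directly: if $V=\val(A_{k_1},\dots,A_{k_\ell})$ with $\ell\ge 2$, then the right-nested commutator in $\.g_{\+A_{[L]}}$ obtained by replacing $A_{k_1}$ with $A_{k_1}\otimes\chi^j$ and every other factor with $A_{k_m}\otimes I$ equals exactly $V\otimes\chi^j$, and being a nested commutator of length $\ge 2$ it lies in $[\.g_{\+A_{[L]}},\.g_{\+A_{[L]}}]$.

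The difficulty is reaching the higher powers $\chi^j$ with $j>\ell$, and I expect this to be the crux. The naive route would require $\+A$-sequences of value $V$ and unbounded length, which need not exist. Instead I would exploit that $[\.g_\+A,\.g_\+A]$ is semisimple (Fact~\ref{directsum}), hence \emph{perfect}: $[\.g_\+A,\.g_\+A]=[[\.g_\+A,\.g_\+A],[\.g_\+A,\.g_\+A]]$. Proceeding by induction on $j\ge 2$, every $V\in[\.g_\+A,\.g_\+A]$ is a sum of brackets $[X,Y]$ with $X,Y\in[\.g_\+A,\.g_\+A]$, and $[X,Y]\otimes\chi^j=[X\otimes\chi^{j-1},\,Y\otimes\chi]$. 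By the inductive hypothesis $X\otimes\chi^{j-1}\in\.g_{\+A_{[L]}}$ and by the $j=1$ case $Y\otimes\chi\in\.g_{\+A_{[L]}}$, so their commutator lies in $[\.g_{\+A_{[L]}},\.g_{\+A_{[L]}}]$; summing over the decomposition of $V$ gives $V\otimes\chi^j\in[\.g_{\+A_{[L]}},\.g_{\+A_{[L]}}]$. Perfectness thus lets me pump the power of $\chi$ up by one at each step, bypassing the need for long stable extensions.

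For the center, I would first establish the general identity $Z(\.g_{\+A_{[L]}})=\.g_{\+A_{[L]}}\cap\big(Z(\.g_\+A)\otimes\spn(\+U_\chi)\big)$: the inclusion $\subseteq$ is Lemma~\ref{lem:center-from-original}\eqref{line:center-from-original-3}, while for $\supseteq$ one checks that any $C=\sum_j C_j\otimes\chi^j$ with $C_j\in Z(\.g_\+A)$ satisfies $[C,A_i\otimes I]=[C,A_i\otimes\chi]=0$ (since $[C_j,A_i]=0$), hence commutes with all generators and so is central. It remains to pin down which powers of $\chi$ occur. Writing $C\in\.g_{\+A_{[L]}}$ via Lemma~\ref{decompositions} as $A'+W'$ with $A'\in\spn(\+A_{[L]})=\spn(\+A)\otimes\spn\{I,\chi\}$ and $W'\in[\.g_{\+A_{[L]}},\.g_{\+A_{[L]}}]=[\.g_\+A,\.g_\+A]\otimes\spn(\+U_\chi)$ by the first identity, and comparing coefficients of the linearly independent $\{\chi^j\}_{j=0}^{K-1}$ (Lemma~\ref{lem:projectors}\eqref{line:projectors-1}), the component of $C$ along $\chi^j$ for $j\ge 2$ comes entirely from $W'$ and hence lies in $[\.g_\+A,\.g_\+A]$; but it also equals $C_j\in Z(\.g_\+A)$, and $Z(\.g_\+A)\cap[\.g_\+A,\.g_\+A]=\{0\}$ by Fact~\ref{directsum}, forcing $C_j=0$ for $j\ge 2$. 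Conversely, $Z(\.g_\+A)\otimes I$ and $Z(\.g_\+A)\otimes\chi$ both lie in $\.g_{\+A_{[L]}}$: any $C\in Z(\.g_\+A)$ decomposes (Lemma~\ref{decompositions}) as $A+W$ with $A\in\spn(\+A)$ and $W\in[\.g_\+A,\.g_\+A]$, so for $j\in\{0,1\}$ we have $C\otimes\chi^j=A\otimes\chi^j+W\otimes\chi^j$ with the first term in $\spn(\+A_{[L]})$ and the second in $[\.g_{\+A_{[L]}},\.g_{\+A_{[L]}}]$ by the first identity. Together these give $Z(\.g_{\+A_{[L]}})=\spn(Z(\.g_\+A)\otimes\{\chi^0,\chi^1\})$.
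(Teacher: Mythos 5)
Your proof is correct. The center computation follows essentially the same route as the paper's (Lemma~\ref{lem:center-from-original}\eqref{line:center-from-original-3} to place the components $C_j$ in $Z(\.g_\+A)$, then Lemma~\ref{decompositions} and coefficient comparison against the linearly independent powers $\chi^j$ to force $C_j\in[\.g_\+A,\.g_\+A]$, hence $C_j=0$, for $j\ge 2$), so there is little to compare there. The genuine divergence is in how you reach $V\otimes\chi^j$ for $j\ge 2$ inside $[\.g_{\+A_{[L]}},\.g_{\+A_{[L]}}]$. The paper stays within the combinatorics of right-nested $\+A$-commutators: it writes $V=[A_{k_\ell},[\ldots,[A_{k_2},A_{k_1}]\ldots]]$, replaces the innermost generator $A_{k_1}$ by its semisimple part $W\in[\.g_\+A,\.g_\+A]$ (using Fact~\ref{directsum} and the fact that the central part drops out of the commutator), and then realizes $V\otimes\chi^{j+1}$ as $[A_{k_\ell}\otimes I,[\ldots,[A_{k_2}\otimes\chi,W\otimes\chi^j]\ldots]]$ with $W\otimes\chi^j$ supplied by the inductive hypothesis. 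You instead invoke perfectness of the semisimple component, $[\.g_\+A,\.g_\+A]=[[\.g_\+A,\.g_\+A],[\.g_\+A,\.g_\+A]]$, to write $V$ as a sum of brackets $[X,Y]$ with $X,Y\in[\.g_\+A,\.g_\+A]$ and split the power as $[X\otimes\chi^{j-1},Y\otimes\chi]$. This is cleaner and more conceptual --- it makes transparent why the obstruction to higher powers disappears, and it decouples the argument from the right-nested-commutator bookkeeping --- but it leans on a structural fact (semisimple implies perfect, applied to $[\.g_\+A,\.g_\+A]$) that the paper names only implicitly via the term ``semisimple component'' and the citation in Fact~\ref{directsum}; you should state that input explicitly, and note the degenerate case $[\.g_\+A,\.g_\+A]=\{0\}$, where the claim is vacuous. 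The paper's argument, by contrast, uses only the reductive decomposition of a single generator and so is more self-contained relative to what the paper formally proves.
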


 \begin{proof}
   We start with the commutator ideal. By Proposition \ref{prop:generating-basis}, there exists a right-nested $\+A$-commutator basis $\+V = \{V_1, \ldots, V_D\}$ for $ [\.g_\+A,\.g_\+A]$.
    We set $\+B =  \+V \otimes \+U_\chi,$ by Lemma~\ref{lem:projectors} \eqref{line:projectors-1} it is a basis for 
    $\spn(  [\.g_\+A,\.g_\+A]\otimes \+U_\chi ).$
    We claim that $\+B$ is also a basis of $[\.g_{\+A_{[L]}}, \.g_{\+A_{[L]}}].$ 
    
     For the claim we first  prove that 
     $\+B \subseteq [\.g_{\+A_{[L]}}, \.g_{\+A_{[L]}}]$. For any $V\in\+V$, we will show by induction on $j$ that $V \otimes \chi^j \in [\.g_{\+A_{[L]}}, \.g_{\+A_{[L]}}],$
     for all $0 \leq j \leq K-1.$ 
    For $j=0$, the statement is true by the definition of
    $[\.g_{\+A_{[L]}}, \.g_{\+A_{[L]}}].$
    Now suppose that the statement is true for some
    $0 \leq j<K-1$ and let us show that
    $V \otimes \chi^{j+1} \in [\.g_{\+A_{[L]}}, \.g_{\+A_{[L]}}].$
    Since $V$
    is the nested commutators 
    of generators, it can be written in the form
    \eq{
V = [A_{k_{\ell}}, [ A_{k_{\ell -1}}, \ldots , [A_{k_{2}}, A_{k_{1}}]\ldots] ],
    }
    for some integers $\ell \geq 2$ and $k_1, \ldots, k_{\ell} \in [L]$. 
    Since $A_{k_1} \not\in Z(\.g_\+A)$ (otherwise the commutator would be $0$), by Fact~\ref{directsum} we can decompose it as $A_{k_1}
    = W+C$ where $W \in [\.g_\+A,\.g_\+A], W \neq 0$ and
    $C \in Z(\.g_\+A).$ Then we also have 
     \eq{
    V = [A_{k_{\ell}}, [ A_{k_{\ell - 1}}, \ldots , [A_{k_{2}}, W]\ldots] ],
    }
    and $W \otimes \chi^{j} \in [\.g_{\+A_{[L]}}, \.g_{\+A_{[L]}}]$
    by the inductive hypothesis.
    Also, $A_{k_2} \otimes \chi \in \.g_{\+A_{[L]}}$
    by assumption.
    Therefore $V \otimes \chi^{j+1}$ can be obtained as nested commutators of elements from $\.g_{\+A_{[L]}}$ as
    \eq{
    V \otimes \chi^{j+1} = 
    [A_{k_{\ell}}\otimes I, [ A_{k_{\ell -1}}\otimes I, \ldots , [A_{k_{2}} \otimes \chi,  W \otimes \chi^{j}] \ldots] ].
    }
    
    To finish the proof of the claim, observe that
    \eq{
    [\.g_{\+A_{[L]}}, \.g_{\+A_{[L]}}] \subseteq 
    \spn (\+V \otimes \{\chi^j : j \geq 0\}),
    }
    and note that, by Lemma~\ref{lem:projectors} \eqref{line:projectors-3}, the set 
    $\+V \otimes \{\chi^j : j \geq 0\}$
    is in the span of $\+B$.
    
    We now turn to the center of $\.g_{\+A_{[L]}}.$
Let $\+C$ be a basis for $Z(\.g_{\+A})$, then 
$\+C \otimes \{\chi^0, \chi^1\}$ is a basis of $\spn(Z(\.g_\+A)\otimes \{\chi^0, \chi^1 \}).$
We prove that it is also a basis of $Z(\.g_{\+A_{[L]}}).$
Lemma~\ref{decompositions} and the definition of $\+A_{[L]}$ 
imply that $\+C \otimes \{\chi^0, \chi^1\}$
is included in $Z(\.g_{\+A_{[L]}})$. Therefore it is sufficient to show that it also generates  $Z(\.g_{\+A_{[L]}})$.

Let $\sum_{j=0}^{K-1} C_j \otimes \chi^j $ be an arbitrary element
in $Z(\.g_{\+A_{[L]}}),$
where $C_j \in \.g_{\+A}$, for $0 \leq j \leq K-1.$ By 
Lemma~\ref{lem:center-from-original} \eqref{line:center-from-original-3}, we have $C_j \in Z(\.g_{\+A})$,
for $0 \leq j \leq K-1.$


    We now prove that for 
    $2 \leq j \leq K-1$, we also have 
    $C_j \in [\.g_{\+A},\.g_{\+A}]$.
    By Lemma~\ref{decompositions}, we can decompose
    $C_j \otimes \chi^j$  as 
    \eq{
      C_j \otimes \chi^j &= A+ W, 
     }
     where $A \in \spn( \+A_{[L]})$ and $W\in [\.g_{\+A_{[L]}},\.g_{\+A_{[L]}}].$
     Therefore $C_j \otimes \chi^j - A \in [\.g_{\+A_{[L]}},\.g_{\+A_{[L]}}].$ 
     Since $A \in \spn( \+A_{[L]})$, there exist $B_0, B_1 \in \.g_{\+A}$ such that
     $A = B_0 \otimes \chi^0 + B_1 \otimes \chi^1.$ Thus
     \eq{
     C_j \otimes \chi^j - B_1 \otimes \chi^1 - B_0 \otimes \chi^0 \in 
     [\.g_{\+A_{[L]}},\.g_{\+A_{[L]}}].
     }
     The subspaces $\{[\.g_\+A,\.g_\+A]  \otimes \chi^j : 0 \leq j \leq K-1\}$ are linearly independent and span 
     $[\.g_{\+A_{[L]}},\.g_{\+A_{[L]}}]$, thus for every $0 \leq j \leq K-1$,
     there exists a unique element $W_j \in [\.g_\+A,\.g_\+A]$ such 
     that 
     \eq{
     C_j \otimes \chi^j - B_1 \otimes \chi^1 - B_0 \otimes \chi^0 =
     \sum_{k=0}^{K-1} W_k \otimes \chi^k.
     }
     Therefore $C_j = W_j$ for all $j\ge 2$ and  $C_j \in [\.g_{\+A},\.g_{\+A}].$

     Putting this together, 
     we have $C_j \in Z(\.g_{\+A}) \cap [\.g_{\+A},\.g_{\+A}],$
     for $2 \leq j \leq K-1$.
     By Fact~\ref{directsum}, this implies that $C_j =0.$
     Therefore $C = C_0 \otimes \chi^0 + C_1 \otimes \chi^1$
     which finishes the proof.
 \end{proof}


Theorem~\ref{thm:extendgenset} then follows as a corollary of Lemma~\ref{thm:all}.
 
 \ExtendGenSet*


  \begin{proof}
      Lemmas~\ref{thm:all} and~\ref{lem:projectors} \eqref{line:projectors-2} imply that  
   \eq{ [\.g_{\+A_{[L]}}, \.g_{\+A_{[L]}}] = \spn( 
    [\.g_\+A,\.g_\+A]\otimes \+P_\chi) }   
      and then the statement follows immediately from Lemma~\ref{lem:copies_from_projectors}.
  \end{proof}

\begin{rem}For integers $2\le q \le K$, if one defines $\+A'_q=\{A_i \otimes \chi^j : i\in[L], 0\le j\le q-1\}$, then 
Theorem~\ref{thm:extendgensetnaive} corresponds to $\+A'_K$, while Theorem~\ref{thm:extendgenset} corresponds to $\+A'_2$.  By the same proof method as for Theorem~\ref{thm:extendgenset} one can interpolate between these two extremes, i.e., 
 \eq{
  [\.g_{\+A'_q}, \.g_{\+A'_q}]  \cong \bigoplus_{j=1}^{K}[\.g_\+A,\.g_\+A] \text{ ~~ and ~~ } 
  \dim(Z(\.g_{\+A'_q})) = q\dim(Z(\.g_{\+A'_q})),
  }
and achieve $K$ copies of $[\.g_\+A,\.g_\+A]$ and $q$ copies of the center.
\end{rem}

We now prove Theorem~\ref{thm:pauli}, which applies to Pauli DLAs. As any two Pauli operators on $n$-qubits either commute or anticommute, given a Pauli dynamical generating set $\+A$, one can define its corresponding \textit{anticommutation graph} to be $G=(V,E)$ where $V=\+A$ and $E=\{(u,v) : u,v\in\+A, uv = -vu\}$.


\PauliGenSet*


 \begin{proof}
    Our first claim is that for every $1 \leq j \leq L$, the generator
$A_j$ is in $[\.g_\+A, \.g_\+A].$ Indeed, let $A_k$ be a generator such that it anticommutes with $A_j$ . Then by the anticommutation property of Pauli operators, and fact that $A_k^2 = -I$, we have
$[A_{k}, [A_{k}, A_{j}]] \propto  A_{j}.$

Our second claim is that
 $A_j \otimes \chi \in  [\.g_{\+A_{i}}, \.g_{\+A_{i}}],$ for all $1 \leq j \leq L.$ The proof is by induction on the distance of $A_j$ from $A_i$ in the anticommutation graph of $\+A.$
    The statement is true for $A_i$ at distance $0$ from itself because
     \eq{
    [A_{k}\otimes I, [A_{k}\otimes I, A_{i} \otimes \chi]] \propto A_{i} \otimes \chi,
    }
    for any generator $A_k$ which anticommutes with $A_i.$  Now let $A_j$
    be of distance $d+1$ from $A_i$. By the inductive hypothesis there exists $k$ such that $A_{k}$ and $A_j$ anticommute and 
    $A_{k} \otimes \chi \in  \.g_{\+A_{i}}.$ Then  we have 
    \eq{
    [A_{k} \otimes \chi, [A_{k}\otimes I, A_{j}\otimes I]] \propto A_{j} \otimes \chi.
    }

The first claim implies that $[\.g_\+A, \.g_\+A] = \.g_\+A.$
The two claims together give 
$[\.g_{\+A_{i}}, \.g_{\+A_{i}}] = \.g_{\+A_{[L]}}.$
Since 
\eq{
[\.g_{\+A_{i}}, \.g_{\+A_{i}}] \subseteq \.g_{\+A_{i}} \subseteq\.g_{\+A_{[L]}}  {\rm ~~ and ~~}
[\.g_{\+A_{i}}, \.g_{\+A_{i}}] \subseteq 
[\.g_{\+A_{[L]}}, \.g_{\+A_{[L]}}] \subseteq\.g_{\+A_{[L]}},
}
we also have $\.g_{\+A_{i}} = [\.g_{\+A_{[L]}}, \.g_{\+A_{[L]}}]$.
The result then follows from
Theorem~\ref{thm:extendgenset}. 
\end{proof}


For generating sets consisting of only two generators, a similar result is obtained:

\CardTwoGenSet*


\begin{proof}
    We prove this for $i=1$. The $i=2$ case is identical except with $A_1,A_2$ interchanged. We start with the commutator ideal. Let $\+V = \{V_1, \ldots, V_D\}$ be a right-nested $\+A$-commutator basis for $ [\.g_\+A,\.g_\+A]$.
    That is, for all $V\in \+V$,
    \eql{
    V = [A_{k_{\ell}}, [ A_{k_{\ell - 1}}, \ldots , [A_{k_{2}}, A_{k_{1}}]\ldots]], \label{eq:nested}
    }
    for some integers $\ell \geq 2$ and $k_1, \ldots, k_{\ell} \in \{1,2\}$. 
    The indices $k_{1}$ and $k_{2}$ are different, and we can suppose without loss of generality that $k_{2} = 2$ and $k_{1} = 1$.  For every 
    $V \in \+V$, we show by induction that 
     $V \otimes \chi^j \in [\.g_{\+A_{1}}, \.g_{\+A_{1}}],$
     for all $1 \leq j \leq K-1$. The base case $j=0$ is trivial.
    
    Now suppose that the statement is true for some $1 \leq j<K-1$, and let us show that
    $V \otimes \chi^{j+1} \in [\.g_{\+A_{1}}, \.g_{\+A_{1}}]$. 
By Fact~\ref{directsum}, 
there exist
$C_1, C_2 \in Z(\.g_\+A)$ and $W_1, W_2 \in [\.g_\+A,\.g_\+A]$ such that
\eq{
A_i = C_i + W_i,
}
for $i \in\{1,2\}.$
Then 
 \eq{
    V = [W_{k_{\ell}}, [ W_{k_{\ell -1}}, \ldots , [W_{k_{2}}, A_{k_{1}}]\dots] ].
    }
By the inductive hypothesis $W_{k_{2}} \otimes \chi^j \in \.g_{\+A_{1}}$
and therefore 
\eq{
V \otimes \chi^{j+1} = [W_{k_{\ell}}\otimes I, [ W_{k_{\ell -1}}\otimes I, \ldots , [W_{k_{2}} 
\otimes \chi^j, A_{k_{1}} \otimes \chi]\dots] ]
}
is also in $[\.g_{\+A_{1}}, \.g_{\+A_{1}}] .$ 
Similar to the proof of Lemma \ref{thm:all}, it is easily shown that $[\.g_{\+A_{i}}, \.g_{\+A_{i}}] \subseteq 
    \spn (\+V \otimes \{\chi^j : j \geq 0\})$. Therefore, by Lemma~\ref{lem:projectors}, we have
\begin{align}\label{eq:2-generator-gg}
    [\.g_{\+A_{i}}, \.g_{\+A_{i}}] =
    \spn (\+V \otimes \{\chi^j : j \geq 0\}) = \spn (\+V \otimes \{\Pi_k : k\in [K]\}) \cong \bigoplus_{j=1}^K [\.g_\+A, \.g_\+A].    
\end{align}

\begin{table}
    \centering
    \begin{adjustbox}{width=0.9\textwidth}
    \small
    \begin{tabular}{|c|c|c|c|c|}
         \hline 
         & 
         $\spn(\+A) \cap [\.g_{\+A},\.g_{\+A}]$ & $\dim(Z(\.g_{\+A})) $ & $\spn(\+A_1) \cap [\.g_{\+A_1},\.g_{\+A_1}]$ & $\dim(Z(\.g_{\+A_1}))$\\ \hline 
         $A_1\in [\.g_{\+A},\.g_{\+A}]$, $A_2\in [\.g_{\+A},\.g_{\+A}]$  &  $A_1, A_2$ & 0 & $A_1\otimes I$, $A_2\otimes I$, $A_1\otimes \chi$ & 0 \\ \hline 
         $A_1\in [\.g_{\+A},\.g_{\+A}]$, $A_2\notin [\.g_{\+A},\.g_{\+A}]$ & $A_1$ & 1  & $A_1\otimes I$, $A_1\otimes \chi$ & 1 \\ \hline 
         $A_1\notin [\.g_{\+A},\.g_{\+A}]$, $A_2\in [\.g_{\+A},\.g_{\+A}]$  & $A_2$ & 1 & $A_2\otimes I$ & 2 \\ \hline 
         $A_1\notin [\.g_{\+A},\.g_{\+A}]$, $A_2\notin [\.g_{\+A},\.g_{\+A}]$ & $\{0\}$ & 2  & $\{0\}$ & 3 \\ \hline 
    \end{tabular}
    \end{adjustbox}
        \caption{Explicit calculation of $\dim(Z(\.g_{\+A_1}))$.     Entries in the second and fourth columns denote basis vectors for $\spn(\+A)\cap[\.g_\+A,\.g_\+A]$ and $\spn(\+A_1)\cap[\.g_{\+A_1},\.g_{\+A_1}]$, respectively.}
        \label{tab:proof-2-generators}
\end{table}
For the center of $\.g_{\+A_{1}}$, we divide the analysis into 4 cases depending on whether $A_1$ and $A_2$ lie in $[\.g_\+A,\.g_\+A]$. Results are summarized in Table \ref{tab:proof-2-generators}. The computation of a basis for $\spn(\+A) \cap [\.g_\+A,\.g_\+A]$ (second column) is elementary. 
From Eq.~\eqref{eq:2-generator-gg} and Lemma \ref{lem:projectors}, we see that $[\.g_{\+A_1}, \.g_{\+A_1}]= \{ h \otimes \chi^k : h\in[\.g_\+A,\.g_\+A],k=0,1,..., K-1\}$, and a basis of $\spn(\+A_1)  \cap [\.g_{\+A_1},\.g_{\+A_1}]$ (fourth column) then follows from the definition of $\+A_1$. The claim on the dimensions 
of the centers follows from Lemma \ref{lem:seed-center-gg}.

\end{proof}

\section{Cardinality-preserving generator set modifications}\label{sec:gensetcardincrease}

In this section we prove Theorem~\ref{thm:comcyc} and give a number of cases where it can be applied.

We say that a non-zero Hermitian operator $Q$ is \textit{sign unambiguous} if, for every  
pair of non-zero eigenvalues $\lambda,\lambda'$ in the spectrum of $Q$, we have $\lambda \neq -\lambda'$.

Let $\+A = \{A_1, \ldots, A_L\}$ be a dynamical generating set.
We say that $\+A$ is \textit{cyclic} if for every noncommuting 
$A_i, A_j \in \+A$, there exists a stable extension of $(A_i,A_j)$. That is, there exist $A_{k_1},\ldots,A_{k_\ell}$ s.t. $[A_{k_\ell},[\ldots,[A_{k_1},[A_j,A_i]]\ldots]] {\propto} [A_j,A_i]$. We say that an integer $M$ is a \textit{common cycle length} of 
a cyclic dynamical generating set $\+A$
if, for every noncommuting $A_i, A_j \in \+A,$ there exists a 
stable extension of $(A_i,A_j)$ 
of length $e(i,j)$ 
such that $M$ is the least common multiple
of $\{ e(i,j) : [A_i,A_j] \neq 0\}.$ 
 We call a dynamical Lie algebra 
$\.g$ \textit{cyclic} if there exists a cyclic dynamical
generating set $\+A$ such that $\.g =\.g_\+A$. 


\begin{rem}
    If a dynamical generating set is cyclic, this implies a more general property:  given any nested commutator basis $\+V$ for $[\.g_\+A,\.g_\+A]$, for $V\in\+V$, there exist $\+A$-sequences $\underline{A},\underline{A'}$ such that (i) $V = \val({\underline{A}}\circ\underline{A'})$; and (ii) there exists a stable extension of $\underline{A}$.  Theorem~\ref{thm:comcyc} below can be adapted to apply to this more general case, but we state the theorem in terms of the stronger cyclic property as in all of our applications of the theorem the dynamical generating sets satisfy this.
\end{rem}

\CardGenSetFixed*


\begin{proof} 
We first consider the $[\.g_{\+A_{ Q}},\.g_{\+A_{ Q}}]$ part.  Let $M$ be a common cycle length of $\+A$ and 
let $\+V = \{V_1, \ldots, V_D\}$ be a nested commutator basis for $[\.g_\+A,\.g_\+A]$. For any  $V \in \+V$ there exists an $\+A$-sequence
\eq{
\underline{A} =(A_{k_{1}},  A_{k_{2}}, \ldots , A_{k_{\ell -1}}, A_{k_{\ell}} )
}
such that  $\ell \geq 2$ and $\val({\underline{A}}) = V.$ By the definition of a common cycle length,
there exists an $\+A$-sequence $\underline{A}(k_1,k_2)$ of length $e(k_1,k_2)$
such that $\start(\underline{A}) \circ \underline{A}(k_1,k_2)$ is a stable extension of $\start(\underline{A})$ and $e(k_1,k_2)$ divides $M.$

For every integer $j \geq 0,$ we set
\eq{
\underline{A}(k_1,k_2, j) = \underline{A}(k_1,k_2) \circ \cdots \circ \underline{A}(k_1,k_2),
}
where $jM/e(k_1,k_2)$ copies of $\underline{A}(k_1,k_2) $ are concatenated. Then 
$\underline{A}(k_1,k_2, j)$ is of length $jM$ and 
$\start(\underline{A}) \circ \underline{A}(k_1,k_2,j)$ is also a stable extension
of $\start(\underline{A})$. From this we can conclude that
$\start(\underline{A}) \circ \underline{A}(k_1,k_2,j) \circ \fin(\underline{A})$
is of length $\ell + jM$ and 
\eq{
\val(\start(\underline{A}) \circ \underline{A}(k_1,k_2,j) \circ \fin(\underline{A}))
\propto V.
}

This in turn implies that,  if we define $\chi = Q^M$, then
for every $V\in\+V$, 
 there exists $\ell,$ such that we have
\eq{
V\otimes Q^{\ell + jM} 
\in [\.g_{\+A_{G,Q}},\.g_{\+A_{G,Q}}],
}
for all $j\ge 0$, because we can allocate one $Q$ to each $A_{k_j}$ in $\start(\underline{A})$, each $A_{k_j}$ in $\fin(\underline{A})$, and each $A_{k_j}$ in each copy of $\underline{A}(k_1,k_2)$.

Let $Q = \sum_{i=1}^K \lambda_i \Pi_i$ be the spectral decomposition of $Q$ 
where  $\Pi_i$ is the orthogonal projector onto the $\lambda_i$ eigenspace of $Q$, for $i\in[K]$. 
We claim that $\spn(\{Q^{\ell + jM}  : 0 \leq  j \leq K-1 \})
=\spn(\{\Pi_1, \ldots, \Pi_K\})$. To see that, consider the equation
\eq{
\bpm Q^{\ell} \\ Q^{\ell + M} \\ \vdots \\ Q^{\ell + (K-1)M} \epm &= 
\bpm 
\lambda_1^{\ell} & \lambda_2^{\ell}  & \cdots & \lambda_K^{\ell}  \\
\lambda_1^{\ell + M} & \lambda_2^{\ell + M} & \cdots & \lambda_K^{\ell + M} \\
\vdots & \vdots & \ddots & \vdots \\
\lambda_1^{\ell + (K-1) M} & \lambda_2^{\ell + (K-1) M} & \cdots & \lambda_K^{\ell + (K-1) M}
\epm
\bpm
\Pi_1 \\ \Pi_2 \\ \vdots \\\Pi_K
\epm.
}

The determinant of the matrix in the equation is 
$(\prod_{j=1}^K \lambda_j)^\ell$ times the determinant of the Vandermonde matrix $V(\lambda_1^M, \ldots, \lambda_K^M).$
Since $Q$ is sign unambiguous and all $\lambda_i$ are distinct, so are all $\lambda_i^M$. Therefore the above determinant is non-zero,
from which the claim follows
using Lemma~\ref{lem:projectors}, and we get
\begin{align}\label{eq:AQ-gg}
    [\.g_{\+A_{Q}},\.g_{\+A_{Q}}] =
\spn(\{V \otimes \Pi_j: V\in\+V, 1 \leq j \leq K\}).
\end{align}
By Lemma~\ref{lem:copies_from_projectors} this gives the isomorphism 
\eq{
[\.g_{\+A_{Q}},\.g_{\+A_{Q}}]\cong \bigoplus_{j=1}^K [\.g_{\+A},\.g_{\+A}].
}

For the center, clearly $C\otimes Q \in Z(\.g_{\+A_{Q}})$, for every
$C \in Z(\.g_{\+A_{Q}})$. Indeed, $C\otimes Q$ commutes with all generators, and thus also commutes with $\.g_{\+A_Q}$. It also holds that $C\otimes Q\in \.g_{\+A_Q}$: $C$ can be written as a linear combination of $A_j$ plus some $B\in [\.g_{\+A},\.g_{\+A}]$. All $A_j\otimes Q$ are in $\.g_{\+A_Q}$ by definition, and $B\otimes Q$ is also in $\.g_{\+A_Q}$ because of Eq.~\eqref{eq:AQ-gg}.

The proof of the other inclusion is basically
identical to the proof of the analogous claim in Lemma~\ref{thm:all}. 
Let $\sum_{j=0}^{K-1} C_j \otimes Q^j $ be an arbitrary element
in $Z(\.g_{\+A_{Q}}),$
where $C_j \in \.g_{\+A}$, for $0 \leq j \leq K-1.$ By 
Lemma~\ref{lem:center-from-original} \eqref{line:center-from-original-3}, we have $C_j \in Z(\.g_{\+A})$,
for $0 \leq j \leq K-1.$

    We claim that for 
    $j \neq 1$, we also have 
    $C_j \in [\.g_{\+A},\.g_{\+A}]$, which 
     by Fact~\ref{directsum} implies that $C_j =0.$ To prove the claim, 
    by Lemma~\ref{decompositions}, we decompose
    $C_j \otimes Q^j$  as 
    \eq{
      C_j \otimes Q^j &= A+ W, 
     }
     where $A \in \spn( \+A_{Q})$ and $W\in [\.g_{\+A_{Q}},\.g_{\+A_{Q}}].$
     Therefore $C_j \otimes Q^j - A \in [\.g_{\+A_{Q}},\.g_{\+A_{Q}}].$ 
     Since $A \in \spn( \+A_{Q})$, there exist $B \in \.g_{\+A}$ such that
     $A =  B \otimes Q.$ Thus
     \eq{
     C_j \otimes Q^j - B \otimes Q  \in 
     [\.g_{\+A_{Q}},\.g_{\+A_{Q}}].
     }
     Then there exist unique elements $W_k \in [\.g_\+A,\.g_\+A]$ such 
     that 
     \eq{
     C_j \otimes Q^j - B \otimes Q =
     \sum_{k=0}^{K-1} W_k \otimes Q^k.
     }
     Therefore $C_j = W_j$ for all $j\ne 1$, i.e., $C_j \in [\.g_{\+A},\.g_{\+A}]$.
\end{proof} 

We now turn to a number of applications of Theorem~\ref{thm:comcyc}. Let $G=(V,E)$ be a graph on $n$ vertices. 
We define the set of QAOA-MaxCut DLA generators to be $\+M_G=\{A_1,A_2\}$, where 
\eq{
A_1&= \sum_{j=1}^n iX_j, \quad A_2=\sum_{(j,k)\in E} iZ_jZ_k.
}
Here, $A_1,A_2$ are $n$-qubit operators, $X_j$ is the Pauli $X$ operator acting on the $j$-th qubit (with the identity acting on all others) and, $Z_jZ_k$ is the tensor product of Pauli $Z$ operators acting on qubits $j$ and $k$ (again, with the identity acting on all other qubits). 
Similarly, we define $\+S_G=\{A_1,A_2, A_3\}$, where
\eq{
A_1 &= \sum_{j=1}^n iX_j,\quad  A_2= \sum_{j=1}^n iY_j,\quad  A_3 = \sum_{(j,k)\in E} iZ_jZ_k.
}

\begin{rem}
    The special case of $\+S_G$ where $G=K_n$ corresponds to the so-called $S_n$-equivariant dynamical Lie algebra studied by~\cite{albertini2018controllability,schatzki2024theoretical}.
\end{rem}

\begin{thm}\label{thm:qaoa} For any graph $G$, the dynamical generating sets $\+M_{G,Q}=\{A \otimes Q : A \in \+M_{G}\}$ and $\+S_{G,Q}=\{A \otimes Q : A \in \+S_{G}\}$ are cyclic, with common cycle length $2$.



\end{thm}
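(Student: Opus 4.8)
The plan is to verify the cyclic property directly from its definition: for every pair of noncommuting generators I would exhibit a stable extension obtained by appending exactly two generators, so that $e(i,j)=2$ for every edge of the anticommutation graph and hence the common cycle length, being the least common multiple of these $e(i,j)$, equals $2$. Since tensoring each generator with $Q$ leaves intact which pairs commute, the substance is the commutator structure of the operators $A_1,A_2$ (and, for $\+S_G$, $A_3$), i.e.\ the cyclicity of $\+M_G$ and $\+S_G$ that Theorem~\ref{thm:comcyc} requires of its input. First I would enumerate the noncommuting pairs: for $\+M_G$ only $(A_1,A_2)$ (nontrivial exactly when $E\neq\emptyset$), and for $\+S_G$ the pairs $(A_1,A_2)$, $(A_1,A_3)$, $(A_2,A_3)$.

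The central claim is that appending a single-qubit-type generator to itself realises the stable extension, i.e.\ $\val((A_i,A_j,A_i,A_i))=[A_i,[A_i,[A_j,A_i]]]\propto[A_j,A_i]$. I would prove this by a short term-by-term computation over the vertices and edges of $G$. The easiest case is $(A_1,A_2)$ in $\+S_G$, which is the collective $\.{su}(2)$ relation: writing $B=[A_2,A_1]\propto\sum_j iZ_j$, one has $[A_1,B]\propto A_2$ and $[A_1,A_2]=-B$, so $[A_1,[A_1,B]]=-4B\propto B$.

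The delicate cases are those involving the edge-type operator ($A_2$ for $\+M_G$, and $A_3$ for $\+S_G$); I present the former, the latter being identical. Here $\mathrm{ad}_{A_1}^2$ fails to act proportionally on the generator: setting $B=[A_2,A_1]$ one finds $[A_1,B]=8A_2-8Y$, where $Y=\sum_{(j,k)\in E}iY_jY_k$ is a genuinely new operator type. The key point I would stress, and the main obstacle, is this non-closure of $\mathrm{ad}_{A_1}^2$ on the generating set: one must track the auxiliary operator $Y$ produced at the intermediate step and verify that a second application of $\mathrm{ad}_{A_1}$ returns it, with a matching constant, to a multiple of $B$. Indeed $\mathrm{ad}_{A_1}$ sends $\spn\{A_2,Y\}$ back into $\spn\{B\}$, since $[A_1,A_2]=-B$ and $[A_1,Y]=B$, so the double commutator closes: $[A_1,[A_1,B]]=8(-B)-8(B)=-16B\propto B$. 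The pair $(A_2,A_3)$ in $\+S_G$ is handled the same way after interchanging the roles of $X$ and $Y$, the intermediate operator then being $\sum_{(j,k)\in E}iX_jX_k$.

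Because each of these identities is computed edgewise (resp.\ qubit-wise) with a proportionality constant independent of the edge (resp.\ vertex), it survives the summation defining $A_1,A_2,A_3$ and therefore holds for an arbitrary graph $G$; this uniformity of the constant across terms is what lets the argument avoid any case analysis on $G$. Having exhibited a length-$2$ stable extension for each noncommuting pair, every $e(i,j)=2$ and the common cycle length is $2$, establishing the cyclic property, whereupon Theorem~\ref{thm:comcyc} yields the $K$-fold direct-sum decompositions of $[\.g_{\+M_{G,Q}},\.g_{\+M_{G,Q}}]$ and $[\.g_{\+S_{G,Q}},\.g_{\+S_{G,Q}}]$.
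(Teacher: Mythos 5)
Your proposal is correct and takes essentially the same route as the paper: for each noncommuting pair of generators the paper likewise exhibits a length-$2$ stable extension by verifying a triple-commutator identity of the form $[A_i,[A_i,[A_i,A_j]]]\propto[A_i,A_j]$ (it uses $A_2$ rather than $A_1$ for the pair $(A_1,A_2)$ in $\+S_G$, but by the $X\leftrightarrow Y$ symmetry this is the same computation), and merely states these identities without the edgewise calculations and intermediate $Y_jY_k$-type operators that you track explicitly. Your reading that the substantive content is the cyclicity of the untensored sets $\+M_G$ and $\+S_G$ --- the hypothesis that Theorem~\ref{thm:comcyc} actually consumes --- is also the reading the paper's proof implicitly adopts, since it verifies the relations for $A_1,A_2,A_3$ themselves.
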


\begin{proof}
By explicitly computing the appropriate nested commutators, we can verify that in  $\.g_{\+M_{G,Q}}$ we have
\eq{
[A_1, [A_1,[A_1, A_2]]]& \propto [A_1, A_2],
}
and that in $\.g_{\+S_{G,Q}}$ we have
\eq{
[A_1, [A_1, [A_1, A_3]]] &\propto  [A_1,A_3], \\
[A_2, [A_2, [A_2, A_3]]] & \propto  [A_2,A_3], \\
[A_2, [A_2, [A_1, A_2]]] & \propto  [A_1,A_2].
}
\end{proof}

\begin{thm}\label{generalpauli}
Let $\+A$ be a dynamical generating set
such that for all $A \in \+A$, there exists $\lambda_A < 0$ for which $A^2 = \lambda_A I$. Then, $\+A$ is cyclic with common cycle length at most $2$. 
\end{thm}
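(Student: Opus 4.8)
The plan is to exhibit, for every noncommuting pair $A_i, A_j \in \+A$, an explicit stable extension of the length-two sequence $(A_i, A_j)$ whose appended part has length $2$; this simultaneously establishes that $\+A$ is cyclic and that $2$ is a common cycle length. Guided by the QAOA computations of Theorem~\ref{thm:qaoa}, the natural candidate is to append two further copies of $A_i$, i.e.\ to work with the sequence $(A_i, A_j, A_i, A_i)$. Writing $C = \val((A_i,A_j)) = [A_j, A_i]$, which is nonzero precisely because $A_i$ and $A_j$ do not commute, its value is $\val((A_i,A_j,A_i,A_i)) = [A_i,[A_i,C]]$, so the whole argument reduces to showing $[A_i,[A_i,C]] \propto C$.

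First I would record the elementary identity, valid for any operator $X$,
\eq{
[A_i,[A_i,X]] = A_i^2 X + X A_i^2 - 2 A_i X A_i = 2\lambda_{A_i} X - 2 A_i X A_i,
}
using $A_i^2 = \lambda_{A_i} I$. The key step is the observation that when $X = C = A_j A_i - A_i A_j$ the flanking factors of $A_i$ are absorbed: each monomial of $C$ carries an $A_i$ adjacent to one of the flanking copies, so
\eq{
A_i C A_i = A_i A_j A_i^2 - A_i^2 A_j A_i = \lambda_{A_i}(A_i A_j - A_j A_i) = -\lambda_{A_i} C.
}
Substituting back gives $[A_i,[A_i,C]] = 2\lambda_{A_i} C + 2\lambda_{A_i} C = 4\lambda_{A_i} C$. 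Since $\lambda_{A_i} < 0$ we have $4\lambda_{A_i}\neq 0$, hence $[A_i,[A_i,C]] \propto C$, and $(A_i,A_j,A_i,A_i)$ is a stable extension of $(A_i,A_j)$ whose appended part $(A_i,A_i)$ has length $2$.

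Finally I would assemble the conclusion. Every noncommuting pair admits a stable extension of $(A_i,A_j)$, so $\+A$ is cyclic by definition; and since each extension length $e(i,j)$ can be taken equal to $2$, the least common multiple of $\{e(i,j) : [A_i,A_j]\neq 0\}$ is $2$ (or $1$ if $\+A$ has no noncommuting pair, which is vacuously covered), giving a common cycle length at most $2$. I do not anticipate a genuine obstacle: the entire content is the single commutator identity above, and the only points requiring care are to confirm that the proportionality constant $4\lambda_{A_i}$ is nonzero — guaranteed by $\lambda_{A_i}<0$ — and to correctly match the computed expression $[A_i,[A_i,C]]$ with $\val((A_i,A_j,A_i,A_i))$ in the stable-extension formalism.
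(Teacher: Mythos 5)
Your proposal is correct and follows essentially the same route as the paper: both reduce the claim to the single identity $[A,[A,[A,B]]]=4\lambda_A[A,B]$, with the appended sequence $(A_i,A_i)$ of length $2$ furnishing the stable extension. Your derivation via expanding $[A_i,[A_i,X]]=2\lambda_{A_i}X-2A_iXA_i$ and computing $A_iCA_i=-\lambda_{A_i}C$ is in fact a more direct computation than the paper's chain of $[A^n,B]$ identities, but the underlying argument is identical.
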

\begin{proof}  
We claim that for all $A,B \in \+A$, we have
\eq{
[A,[A,[A,B]]] = 4\lambda_A[A,B].
}

For any operators $A,B$ it is straightforward to verify that $[A^n,B]=A[A^{n-1},B] - [A,[A^{n-1},B]] + A^{n-1}[A,B]$. From this it follows that $[A^2,B]=2A[A,B]-[A,[A,B]]$ and
\eq{
[A^3,B]&=A[A^{2},B] - [A,[A^2,B]] + A^2[A,B] \\
&= A(2A[A,B] - [A,[A,B]])-[A,2A[A,B]-[A,[A,B]]] + A^2[A,B] \\
&=2A^2[A,B] -A[A,[A,B]]-2[A,A[A,B]]+ [A,[A,[A,B]] + A^2[A,B] \\
&= 3A^2[A,B] + [A,[A,[A,B]]  -3A[A,[A,B]].
}
Thus,
\eq{
[A,[A,[A,B]]]&= 3A[A,[A,B]] + [A^3,B]-3A^2[A,B] \\ 
 &=3A(2A[A,B]-[A^2,B]) + [A^3,B]-3A^2[A,B]\\
&=3A^2[A,B] + [A^3,B]-3A[A^2,B].
}
If $A^2=\lambda_A I$ then $[A,[A,[A,B]]] = 3\lambda_A [A,B] + \lambda_A [A,B] = 4\lambda_A [A,B]$. 
\end{proof}

\begin{rem} If $\+A=\{A_1,A_2\}$ 
then Theorem~\ref{generalpauli} already holds 
when either $A_1^2 = \lambda I$ or $A_2^2 = \lambda I$, for some $\lambda <0$. 
\end{rem}

\begin{thm}\label{thm:pauli_two} Every Pauli dynamical generating set is cyclic, with common cycle length $2$.
 \end{thm}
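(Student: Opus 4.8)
The plan is to reduce the statement to Theorem~\ref{generalpauli}. Every element of a Pauli dynamical generating set has the form $A_j = iP_j$ for an $n$-qubit Pauli operator $P_j$, and since every Pauli operator squares to the identity we have $A_j^2 = i^2 P_j^2 = -I$. Hence each generator satisfies the hypothesis $A^2 = \lambda_A I$ of Theorem~\ref{generalpauli} with $\lambda_{A_j} = -1 < 0$. Applying that theorem immediately yields that $\+A$ is cyclic with common cycle length at most $2$; concretely, for any noncommuting pair $A_i, A_j$ the computation in the proof of Theorem~\ref{generalpauli} gives $[A_j,[A_j,[A_j,A_i]]] = 4\lambda_{A_j}[A_j,A_i] = -4[A_j,A_i] \propto [A_j,A_i]$, so appending the length-$2$ sequence $(A_j, A_j)$ to $(A_i,A_j)$ is a stable extension of $(A_i,A_j)$.

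It then remains to argue that the common cycle length is genuinely $2$ and does not collapse to $1$, which would require a length-$1$ stable extension of every noncommuting pair. I would rule this out using the anticommutation dichotomy for Pauli operators: two noncommuting Paulis necessarily anticommute, so for a noncommuting pair $A_i = iP_i$, $A_j = iP_j$ the value $[A_j, A_i] \propto P_i P_j$ is a nonzero scalar multiple of a nontrivial (non-identity) Pauli operator. A length-$1$ extension by some generator $A_k = iP_k$ produces $[A_k, [A_j, A_i]] \propto [P_k, P_iP_j]$, which is either $0$ (when $P_k$ commutes with $P_iP_j$) or a nonzero scalar multiple of the Pauli $P_kP_iP_j$ (when $P_k$ anticommutes). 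Neither is a nonzero scalar multiple of $P_iP_j$, since $P_kP_iP_j \propto P_iP_j$ would force $P_k \propto I$, which is impossible as each generator is traceless. Thus no length-$1$ stable extension exists, and since all the extensions supplied by Theorem~\ref{generalpauli} have length $2$, the least common multiple of the chosen extension lengths is exactly $2$.

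I expect no serious obstacle here: the reduction to Theorem~\ref{generalpauli} is immediate once one observes $A_j^2 = -I$, and the only non-routine point is the short verification that length-$1$ extensions fail, which follows directly from the fact that commuting with a single Pauli either annihilates a Pauli or sends it to a distinct Pauli.
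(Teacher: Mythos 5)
Your proof is correct and follows the same route as the paper: the paper's proof is simply that the statement is immediate from Theorem~\ref{generalpauli} because Pauli generators $A_j=iP_j$ have eigenvalues $\pm i$ (equivalently $A_j^2=-I$). Your additional check that no length-$1$ stable extension exists is fine but not needed for the claim, since exhibiting a length-$2$ stable extension for every noncommuting pair already makes $2$ a common cycle length (the lcm of the chosen lengths $\{2\}$ is $2$).
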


\begin{proof}
Immediate from Theorem~\ref{generalpauli},  since elements of a Pauli dynamical generating set have eigenvalues $\pm i$.
\end{proof}

From Theorem~\ref{thm:comcyc}, we obtain the following corollary:

\begin{cor} Let $\+A$ be a dynamical generating set from Theorems~\ref{thm:qaoa},\ref{generalpauli},or~\ref{thm:pauli_two}, and let $Q$ be a sign unambiguous
 Hermitian operator with $K$ distinct non-zero eigenvalues.
Set $\+A_{Q}=\+A\otimes \{Q\}$. Then, 
\eq{
[\.g_{\+A_{ Q}},\.g_{\+A_{ Q}}]\cong \bigoplus_{j=1}^{K} [\.g_{\+A},\.g_{\+A}]
,
}
and 
\eq{
 Z(\.g_{\+A_{Q}}) =   \{C\otimes Q : C \in Z(\.g_{\+A})\}.
}   
\end{cor}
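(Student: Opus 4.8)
The plan is to prove the final \verb|\begin{cor}...\end{cor}| by reducing it directly to Theorem~\ref{thm:comcyc}, whose hypotheses are exactly that the generating set be cyclic and that $Q$ be sign unambiguous with $K$ distinct non-zero eigenvalues. The only thing that needs checking is that each of the generating sets named in the corollary is in fact cyclic; everything else is immediate from the theorem. First I would observe that the sets $\+A$ coming from Theorem~\ref{thm:qaoa}, Theorem~\ref{generalpauli}, and Theorem~\ref{thm:pauli_two} are precisely the families that those three theorems assert to be cyclic. Thus for any such $\+A$, the hypothesis ``$\+A$ is a cyclic dynamical generating set'' required by Theorem~\ref{thm:comcyc} is satisfied.

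Next I would note a subtlety in the notation. Theorem~\ref{thm:comcyc} applies to $\+A_Q = \{A_1\otimes Q,\ldots,A_L\otimes Q\} = \+A\otimes\{Q\}$, where $\+A$ itself is required to be cyclic. In the case of Theorem~\ref{thm:qaoa}, what is proved cyclic is already the tensored set $\+M_{G,Q}$ or $\+S_{G,Q}$, so one must be slightly careful about which operator plays the role of $Q$; however, the stable-extension/common-cycle-length property verified there (namely the relations $[A_1,[A_1,[A_1,A_2]]]\propto[A_1,A_2]$, and the analogous relations for $\+S_G$) hold already at the level of $\+M_G$ and $\+S_G$, so these underlying sets are cyclic with common cycle length $2$, and Theorem~\ref{thm:comcyc} then applies with $\+A=\+M_G$ or $\+A=\+S_G$. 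For Theorems~\ref{generalpauli} and~\ref{thm:pauli_two} the set $\+A$ in the corollary is literally the cyclic set established there, so no reinterpretation is needed.

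Having established cyclicity, I would simply invoke Theorem~\ref{thm:comcyc} with the given sign-unambiguous $Q$ (which by hypothesis has $K$ distinct non-zero eigenvalues). The theorem immediately yields both conclusions:
\eq{
[\.g_{\+A_{Q}},\.g_{\+A_{Q}}]\cong \bigoplus_{j=1}^{K}[\.g_\+A,\.g_\+A],
\qquad
Z(\.g_{\+A_Q}) = \{C\otimes Q : C\in Z(\.g_\+A)\}.
}
There is essentially no obstacle here: the corollary is a packaging of Theorem~\ref{thm:comcyc} together with the three cyclicity theorems, and the proof is a one-line appeal to those results. The only point requiring any care—and the nearest thing to a ``hard part''—is bookkeeping around the notation $\+A_Q$ versus the already-tensored $\+M_{G,Q}$ and $\+S_{G,Q}$ in Theorem~\ref{thm:qaoa}, ensuring the common cycle length $2$ is read off from the correct (untensored) generating set before applying the main theorem.
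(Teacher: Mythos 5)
Your proof is correct and follows essentially the same route as the paper, which states the corollary as an immediate consequence of Theorem~\ref{thm:comcyc} combined with the cyclicity established in Theorems~\ref{thm:qaoa}, \ref{generalpauli}, and~\ref{thm:pauli_two}. Your remark about reading the common cycle length off the untensored sets $\+M_G$ and $\+S_G$ rather than $\+M_{G,Q}$ and $\+S_{G,Q}$ is a valid and careful observation about a notational wrinkle the paper glosses over, but it does not change the argument.
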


\section{Discussion}

In this work we have initiated a study of qubit- and parameter-efficient ways in which a DLA generating set $\+A$ can be modified so that the resulting DLA is isomorphic to a direct sum of multiple copies of $\.g_\+A$. For variational quantum algorithm design, it would also be desirable to have modifications to $\+A$ effecting other DLA transformations. For example we have the following related questions:
\begin{itemize}
    \item In a similar spirit to the work of~\cite{zimboras2015symmetry}, under what conditions would the change $\+A\ra\+A'$ leave the DLA invariant?
    \item Given a specific subalgebra $\.h \subseteq \.g_\+A$, what change $\+A\ra\+A'$ would give $\.g_{\+A'}\cong \.h$?
    \item Given generator sets $\+A$ and $\+B$, what generator sets $\+C$ would give $\.g_\+C \cong \.g_\+A \oplus \.g_\+B$?
\end{itemize}
As is the case for generating tensor powers of a DLA, it is easy to construct trivial ways to achieve each of the above. However, modifying the generator sets in a way that is either qubit- or parameter-efficient, or else can be used to analyze the structure of other DLAs, may be more involved. But, if such tools can be developed, they would greatly aid the study and design of variational quantum algorithms where DLAs, other than those with Pauli generators, remain difficult to analyze in general.

\section*{Acknowledgements}
{M.S. was} supported by the National Research Foundation, Singapore through the National Quantum Office, hosted in A*STAR, under its Centre for Quantum Technologies Funding Initiative (S24Q2d0009).

\normalem
\bibliographystyle{unsrt}
\bibliography{template.bib}

\appendix
\section{Proof of Fact~\ref{fact:nested}}\label{app:nested-comms}

Here we give a proof of Fact~\ref{fact:nested} from the main text.

Let $\+A$ be a dynamical generating set. Define the sets $\+S_1=\+A$, $\+S_2=\{[A,B]: A,B\in \+A\}$ and, for integers $j\ge 3$, 
\eq{
\+S_{j}&=\bigcup_{{k=1}}^{j-1} \{[A,B] : A \in \+S_{k}, B\in \+S_{{j-k}} \}.
}
In other words, $\+S_n$ is the set of nested commutators of $n$ elements from $\+A$, where the nesting occurs in any order.  For $k\ge 2$, define right-nested commutators of degree $k$ to be nested commutator of the form $[A_{i_{1}},[\dots[A_{i_{k-1}},A_{i_k}]\dots]]$.  

\begin{lem}\label{lem:nested_any_two} 
Let $P$ be an element of $\+A$ or a right-nested commutator of degree $\ge 2$, and $Q$ a right-nested commutator of degree $2$.  Then, $[P,Q]$ can be expressed as a linear combination of right-nested commutators.
\end{lem}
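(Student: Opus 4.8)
The plan is to prove the statement by induction on the degree $k$ of $P$, while simultaneously establishing a companion claim that $[P,A]$ is a linear combination of right-nested commutators for every single generator $A \in \+A$. These two claims are mutually dependent: the lemma (where the second argument has degree two) is reduced to the generator case by one application of the Jacobi identity, whereas the generator case at degree $k$ appeals to the lemma at degree $k-1$. The repeatedly used elementary observation is that prepending a generator preserves right-nestedness: if $R$ is a linear combination of right-nested commutators and $A \in \+A$, then $[A,R]$ is again such a linear combination, since $[A,[A_{i_1},[\dots]]]$ is by definition a right-nested commutator of one higher degree. The whole argument therefore amounts to using the Jacobi identity to rewrite brackets so that every inner commutator is taken against a single generator, and then prepending generators.

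Write $Q=[B,C]$ with $B,C\in\+A$. For the base case $k=1$, $P\in\+A$ is a generator, so $[P,[B,C]]$ is already a right-nested commutator of degree three, and $[P,A]=-[A,P]$ is a scalar multiple of the right-nested commutator $[A,P]$; both claims hold. For the inductive step, write $P=[A_0,P']$ with $A_0\in\+A$ and $P'$ a right-nested commutator of degree $k-1$. I would first settle the generator claim at degree $k$ using the Jacobi expansion
\[
[P,A]=[[A_0,P'],A]=[A_0,[P',A]]-[P',[A_0,A]].
\]
Here $[P',A]$ is a linear combination of right-nested commutators by the generator claim at degree $k-1$, so $[A_0,[P',A]]$ is too; and $[A_0,A]$ is a right-nested commutator of degree two, so $[P',[A_0,A]]$ is handled by the lemma at degree $k-1$. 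Hence $[P,A]$ is a linear combination of right-nested commutators.

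Having the generator claim at degree $k$ in hand, I would then obtain the lemma at degree $k$ from the Jacobi expansion
\[
[P,[B,C]]=[[P,B],C]+[B,[P,C]]=-[C,[P,B]]+[B,[P,C]].
\]
Both $[P,B]$ and $[P,C]$ are linear combinations of right-nested commutators by the generator claim just established, and prepending the generators $C$ and $B$ preserves this, completing the induction.

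The step I expect to be the main obstacle is arranging the induction so that it actually closes. A naive recursion on the lemma alone breaks down, because expanding $[P,[B,C]]$ directly can create an inner commutator such as $[A_0,[B,C]]$ of degree three in the second slot, which the lemma does not cover. The resolution is exactly the simultaneous induction together with a fixed internal ordering at each degree: prove $[P,A]$ first, drawing only on degree-$(k-1)$ facts, and only then prove $[P,[B,C]]$, drawing on the degree-$k$ generator claim. This keeps every second argument at degree at most two and makes the induction well-founded.
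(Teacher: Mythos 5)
Your proof is correct, but it takes a genuinely different and considerably heavier route than the paper, which disposes of this lemma with a single application of the Jacobi identity and no induction at all: writing $Q=[A,B]$, the paper rearranges to
\[
[P,[A,B]] = -[A,[B,P]] + [B,[A,P]],
\]
and observes that $[B,P]$ and $[A,P]$ are themselves right-nested commutators (prepending a generator to $P$, which is either a generator or right-nested, stays right-nested), so both terms are directly right-nested of degree $\deg(P)+2$. The obstacle you identify as the main difficulty --- that a naive recursion leaves inner brackets like $[P,B]$ or $[P,C]$ with $P$ in the outer slot, forcing a companion claim and a simultaneous induction --- is self-inflicted: by antisymmetry $[P,B]=-[B,P]$, and $[B,P]$ is already right-nested, so your final expression $-[C,[P,B]]+[B,[P,C]] = [C,[B,P]]-[B,[C,P]]$ is a linear combination of right-nested commutators on sight. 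Your double induction is valid and well-founded as you order it, but the entire apparatus (the generator claim, the internal ordering at each degree) can be deleted once one orients the Jacobi identity so that $P$ lands innermost. The only thing your longer argument buys is a self-contained proof of the auxiliary fact that $[P,A]$ is right-nested for $A\in\+A$, which the paper instead treats as the immediate observation $[P,A]=-[A,P]$.
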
 
 
\begin{proof}
Write $Q=[A,B]$ for some $A,B\in\+A$. Then, by the Jacobi identity and anticommutativity of the Lie bracket,
\begin{align}
[P,[A,B]]&= -[A,[B,P]] - [B,[P,A]] = -[A,[B,P]] + [B,[A,P]].
\end{align}
\end{proof}

\begin{lem}\label{lem:nested_any_any}
Let $P$ be an element of $\+A$ or a right-nested commutator of degree $\ge 2$, and $Q$ a right-nested commutator of degree $\ge 2$. Then, $[P,Q]$ can be expressed as a linear combination of right-nested commutators. 
\end{lem}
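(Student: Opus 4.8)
The plan is to induct on the degree $d$ of the right-nested commutator $Q$. The base case $d = 2$ is precisely Lemma~\ref{lem:nested_any_two}, which already expresses $[P,Q]$ as a linear combination of right-nested commutators for every admissible $P$ when $Q$ has degree $2$. For the inductive hypothesis I would assume the statement holds whenever the second argument is a right-nested commutator of degree strictly less than $d$, for \emph{every} $P$ that is a generator or a right-nested commutator of degree $\ge 2$.

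For the inductive step, write $Q = [A, Q']$ with $A \in \+A$ and $Q'$ a right-nested commutator of degree $d - 1 \ge 2$. Applying the Jacobi identity together with anticommutativity of the bracket gives
\[ [P, Q] = [P, [A, Q']] = -[A, [Q', P]] - [Q', [P, A]]. \]
I would then reduce each term to a linear combination of right-nested commutators using the inductive hypothesis and one structural observation: if $A \in \+A$ and $R$ is a right-nested commutator, then $[A, R]$ is, by definition, again right-nested.

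For the first term, note $[Q', P] = -[P, Q']$; since $Q'$ has degree $d - 1 < d$, the inductive hypothesis writes $[P, Q']$, and hence $[Q', P]$, as $\sum_m c_m R_m$ with each $R_m$ right-nested. Then $-[A, [Q', P]] = \sum_m c_m [A, R_m]$, and each $[A, R_m]$ is right-nested by the structural observation. For the second term, $[P, A]$ is, up to a sign, a single right-nested commutator (directly if $P \in \+A$, and via $[P,A] = -[A,P]$ if $P$ is right-nested of degree $\ge 2$); writing $[P, A] = \sum_p d_p S_p$ with each $S_p$ right-nested, a swap yields $-[Q', [P, A]] = \sum_p d_p [S_p, Q']$, and each $[S_p, Q']$ is handled by the inductive hypothesis since $Q'$ has degree $< d$ while $S_p$ is an admissible first argument.

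The only real care needed, and the step I would watch most closely, is the bookkeeping with anticommutativity so that in every application of the inductive hypothesis the second slot holds a right-nested commutator of degree strictly below $d$ while the first slot holds an admissible operator (a generator or a right-nested commutator of degree $\ge 2$). Once the terms are arranged this way, the induction closes and summing the two contributions shows that $[P,Q]$ is a linear combination of right-nested commutators.
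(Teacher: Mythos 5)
Your proposal is correct and follows essentially the same route as the paper's proof: induction on the degree of $Q$, the decomposition $Q=[A,Q']$ with $A\in\+A$, the Jacobi identity producing the two terms $[A,[P,Q']]$ and $-[[A,P],Q']$ (up to sign bookkeeping), the observation that prepending a generator preserves right-nestedness, and the inductive hypothesis applied with $Q'$ in the second slot. The extra care you take with signs and with the case split on whether $P$ is a generator is just a more explicit rendering of the same argument.
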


\begin{proof}
The proof is by induction, with the hypothesis that the claim is true for $P$ an element of $\+A$ or a right-nested commutator of degree $\ge 2$
and $Q$ a nested commutator of degree $\ell$.  By Lemma~\ref{lem:nested_any_two}, the base case $\ell=2$ is true.
  
Write $Q=[A,Q']$ where $A\in \+A$ and $Q'$ is a degree-$\ell$ right-nested commutator. Then,
\begin{align}
[P,Q] &= [P,[A,Q']] = - [A,[Q',P]]-[Q',[P,A]] \quad \text{(Jacobi)}\\
&=[A,[P,Q']]-[[A,P],Q']
\end{align}

By the inductive hypothesis, $[P,Q']$ is a linear combination of right-nested commutators, and thus so is $[A,[P,Q']]$. 
$[A,P]$ is a right-nested commutator, and again by the inductive hypothesis $[[A,P],Q']$ is a linear combination of right-nested commutators.
\end{proof}

We now prove Fact~\ref{fact:nested}, which we restate as follows:
\begin{thm}
For $n\ge 2$, any $X\in \+S_n$ can be written as a linear combination of right-nested commutators.
\end{thm}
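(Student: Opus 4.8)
The statement to prove is that for $n \ge 2$, any element $X \in \+S_n$ (a nested commutator of $n$ generators, nested in an arbitrary order) can be written as a linear combination of right-nested commutators. The plan is to proceed by strong induction on $n$, using Lemma~\ref{lem:nested_any_any} as the central tool.

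First I would set up the induction. The base case $n=2$ is immediate: $\+S_2$ consists of commutators $[A,B]$ with $A,B \in \+A$, and each such element is by definition a right-nested commutator of degree $2$. For the inductive step, suppose the claim holds for all $\+S_m$ with $2 \le m < n$. Take any $X \in \+S_n$. By the definition of $\+S_n$, there is some $1 \le k \le n-1$ and elements $P \in \+S_k$, $Q \in \+S_{n-k}$ such that $X = [P,Q]$.

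Next I would handle the two factors $P$ and $Q$ via the inductive hypothesis. If $k=1$ then $P \in \+S_1 = \+A$ is a generator; otherwise $k \ge 2$ and by the inductive hypothesis $P$ is a linear combination of right-nested commutators (each of degree $\ge 2$). Similarly $Q$ is either a generator or a linear combination of right-nested commutators. By bilinearity of the Lie bracket, $X = [P,Q]$ becomes a linear combination of terms of the form $[P',Q']$ where each of $P',Q'$ is either a generator or a right-nested commutator of degree $\ge 2$. Now Lemma~\ref{lem:nested_any_any} applies directly to each such $[P',Q']$ (after possibly using antisymmetry $[P',Q'] = -[Q',P']$ to orient things so the second argument is a right-nested commutator of degree $\ge 2$; if both arguments are generators then $[P',Q']$ is already right-nested of degree $2$), giving that each term, and hence $X$, is a linear combination of right-nested commutators.

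The main subtlety — really the only one — is a bookkeeping point: Lemma~\ref{lem:nested_any_any} requires its \emph{second} argument to be a right-nested commutator of degree $\ge 2$, so one must be careful in the edge case where one factor is a bare generator. If $Q'$ is a generator but $P'$ is a right-nested commutator of degree $\ge 2$, I would rewrite $[P',Q'] = -[Q',P']$ and apply the lemma with $P'$ now in the second slot; if both are generators, the bracket is already a right-nested commutator and nothing needs to be done. Once this case distinction is dispatched, the result follows. No genuinely hard step remains, since all the real combinatorial work of reassociating nested brackets has already been packaged into Lemmas~\ref{lem:nested_any_two} and~\ref{lem:nested_any_any} via repeated use of the Jacobi identity.
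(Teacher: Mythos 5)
Your proof is correct and follows essentially the same route as the paper's: induction on $n$, decomposing $X=[P,Q]$ with $P\in \+S_k$ and $Q\in \+S_{n-k}$, and invoking Lemma~\ref{lem:nested_any_any} on the resulting brackets. The only difference is that you explicitly dispatch the edge case where one factor is a bare generator (via antisymmetry, or noting that a bracket of two generators is already right-nested of degree $2$), a bookkeeping point the paper's proof leaves implicit.
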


\begin{proof}
We induct on $n$. The base case $n=2$ is trivial. Any $X\in \+S_{n+1}$ is of the form $X=[X',X'']$ where $X'\in \+S_{j}$ and $X''\in \+S_{n+1-j}$, for some $j\in[n]$. By the inductive hypothesis, both $X',X''$ can be written as linear combinations of right-nested commutators. The claim then follows from Lemma~\ref{lem:nested_any_any}.
\end{proof}

\end{document}